\date{}
\title{Jacobi Stability Analysis for Systems of ODEs Using \\ Symbolic Computation}
\author{Bo Huang$^{\text{a}}$, Dongming Wang$^{\text{b,c}}$ and Jing Yang$^{\text{d}}$\\
	\it\footnotesize $^{\text{a}}$LMIB -- School of Mathematical Sciences,
Beihang University, Beijing 100191, China \\
	\it\footnotesize bohuang0407@buaa.edu.cn\\
	\it\footnotesize $^{\text{b}}$LMIB -- Institute of Artificial Intelligence, Beihang University, Beijing 100191, China\\
    \it\footnotesize $^{\text{c}}$Centre National de la Recherche Scientifique, 75794 Paris Cedex 16, France\\
	\it\footnotesize Dongming.Wang@buaa.edu.cn\\
\it\footnotesize $^{\text{d}}$SMS -- HCIC -- School of Mathematics and Physics, Guangxi Minzu University, Nanning 530006, China \\
	\it\footnotesize yangjing0930@gmail.com}
\newtheorem {theorem*}{Theorem}
\newtheorem{theorem} {Theorem}
\newtheorem{definition}{Definition}
\newtheorem{proposition}{Proposition}
\newtheorem{lemma}{Lemma}
\newtheorem{remark}{Remark}
\newtheorem{open problem} {Open problem}
\numberwithin{equation}{section}
\begin{document}
\maketitle
\noindent {\bf Abstract.}  The classical theory of Kosambi--Cartan--Chern (KCC) developed in differential geometry provides a powerful method for analyzing the behaviors of dynamical systems. In the KCC theory, the properties of a dynamical system are described in terms of five geometrical invariants, of which the second corresponds to the so-called Jacobi stability of the system. Different from that of the Lyapunov stability that has been studied extensively in the literature, the analysis of the Jacobi stability has been investigated more recently using geometrical concepts and tools. It turns out that the existing work on the Jacobi stability analysis remains theoretical and the problem of algorithmic and symbolic treatment of Jacobi stability analysis has yet to be addressed. In this paper, we initiate our study on the problem for a class of ODE systems of arbitrary dimension and propose two algorithmic schemes using symbolic computation to check whether a nonlinear dynamical system may exhibit Jacobi stability. The first scheme, based on the construction of the complex root structure of a characteristic polynomial and on the method of quantifier elimination, is capable of detecting the existence of the Jacobi stability of the given dynamical system. The second algorithmic scheme exploits the method of semi-algebraic system solving and allows one to determine conditions on the parameters for a given dynamical system to have a prescribed number of Jacobi stable fixed points. Several examples are presented to demonstrate the effectiveness of the proposed algorithmic schemes.

\smallskip

\noindent {\bf Math Subject Classification (2020).} 34C07; 68W30.

\smallskip

\noindent {\bf Keywords.} {Algorithmic approach; dynamical systems; Jacobi stability; KCC theory; quantifier elimination; semi-algebraic system; symbolic computation}

\section{Introduction}

Differential equations are widely used in science and engineering for modeling real-world phenomena of various kinds. The problem of analyzing the qualitative behaviors of systems of differential equations is essential and has to be studied theoretically and/or practically whenever the dynamical properties of the solution of the systems in question need be well understood. In this paper, we are concerned with dynamical systems of first-order ordinary differential equations
\begin{equation}\label{eq1-0}
\begin{split}
\dot{\boldsymbol{x}}=\boldsymbol{f}(\boldsymbol{x},\boldsymbol{\mu}),\quad \boldsymbol{f}:\mathbb{R}^n\times\mathbb{R}^p\rightarrow\mathbb{R}^n,
\end{split}
\end{equation}
where $\boldsymbol{x}=(x_1,\ldots,x_n)$ are variables, $\boldsymbol{\mu}=(\mu_1,\ldots,\mu_p)$ are real parameters, and $\boldsymbol{f}=(f_1,\ldots,f_n)$ are rational functions in $\mathbb{R}(\boldsymbol{x})$. Our study is focused on describing the local stability of equilibria and the global stability of late-time deviations of solution trajectories. The global stability of the solutions of the dynamical systems can be described by using the well-known stability theory of Lyapunov, where the fundamental quantities are the so-called Lyapunov exponents, used to measure the exponential deviations from the given trajectory. On the other hand, the local stability of solutions of dynamical systems is much less understood. Even though the method of Lyapunov has been well studied, it remains interesting to extend the theory and methodology for stability analysis of dynamical systems from different points of view and to compare the results with those of the corresponding analysis of Lyapunov exponents. One of the approaches that offer important new insights for stability analysis of dynamical systems is what may be called the geometro-dynamical approach, which has been developed since Kosambi \cite{Kosambi1933}, Cartan \cite{Cartan1933} and Chern \cite{Chern1939}. The differential-geometric theory of Kosambi--Cartan--Chern (KCC) is established from the variational equations for the deviation of the whole trajectory to nearby ones, based on the existence of a one-to-one correspondence between a second-order dynamical system and the geodesic equations in an associated Finsler space (see \cite{BHS2012} for more details). The qualitative analysis of dynamical systems in practice usually involves heavy symbolic computation. In the last three decades, remarkable progress has been made on the research and development of symbolic and algebraic algorithms and software tools, bringing the classical qualitative theory of differential equations to computational approaches for symbolic analysis of the qualitative behaviors of diverse dynamical systems (see \cite{HLS97,WDMXBC2005,HTX15,dw91,EKW00,swek-mcs2009,vd09,adrt2016,acj17,HGR10,CCMYZ13,bhlr2015,bdetal:2020}, the survey article \cite{HNW2022}, and references therein).

In this paper, we study the Jacobi stability of dynamical systems of the form \eqref{eq1-0} and show how to effectively \textit{compute a partition of the parametric space of $\boldsymbol{\mu}$ such that inside every open cell of the partition, the system can have a prescribed number of Jacobi stable fixed points}. The main techniques we use are based on the KCC theory and advanced methods of polynomial and semi-algebraic system solving with exact symbolic computation. The idea of using the KCC theory to study the Jacobi stability of nonlinear dynamical systems is not new and a considerable amount of work has already been done on the subject (see \cite{PLA2003,SVS12005,SVS22005,BHS2012,HHLY2015,HPPS2016,GY2017} and references therein). However, the exploration of algorithmic and symbolic-computational approaches for systematical analysis of the Jacobi stability of dynamical systems is a new direction of research in which we are involved.

The rest of the paper is structured as follows. In Section \ref{sect2} some basic concepts and results of the KCC theory are reviewed. Section \ref{sect-main} presents the main results stated as Proposition \ref{kcc-prop-1} and Theorem \ref{semi-kcc}: the former leads to a decision procedure for the existence of Jacobi stable fixed points by quantifier elimination, and the latter allows one to derive necessary and sufficient conditions for systems of the form \eqref{eq1-0} to have given numbers of Jacobi stable fixed points. Two algorithmic schemes for Jacobi stability analysis are presented in Section \ref{sect4}. The effectiveness of our computational approach is demonstrated in Section \ref{sect5} with several examples including the famous Brusselator model, the Cdc2-Cyclin B/Wee1 system and the Lorenz--Stenflo system. The paper concludes in Section \ref{sect6} with some discussions on future research.

\section{KCC Theory and Jacobi Stability of Dynamical Systems}\label{sect2}

In this section, we briefly review the KCC theory and recall the basic concepts and notations that will be used later on. Some geometric objects and useful theoretical results are presented. For more details about the KCC theory, the reader may consult the expository article \cite{BHS2012} and other recent work in \cite{PLA2003,SVS12005,SVS22005,HPPS2016}.

Let $\mathcal{M}$ be a real, smooth $n$-dimensional manifold and let $\mathcal{TM}$ be its tangent bundle. On an open connected subset $\Omega$ of the Euclidean $(2n+1)$-dimensional space $\mathbb{R}^n\times\mathbb{R}^n\times\mathbb{R}$ we introduce a $(2n+1)$-dimensional coordinates system $(\boldsymbol{x},\boldsymbol{y},t)$, where $\boldsymbol{x}=(x_1,x_2,\ldots,x_n)$, $\boldsymbol{y}=(y_1,y_2,\ldots,y_n)$ and $t$ is the usual time coordinate. The coordinates $\boldsymbol{y}$ are defined as
\begin{equation}\label{kcc-1}
\begin{split}
\boldsymbol{y}=\left(\frac{d x_1}{d t},\frac{d x_2}{d t},\ldots,\frac{d x_n}{d t}\right).\nonumber
\end{split}
\end{equation}

A basic assumption in the KCC theory is that the time coordinate $t$ is an absolute invariant. Therefore the only admissible coordinate transformations are
\begin{equation}\label{kcc-2}
\begin{split}
\tilde{t}=t,\quad \tilde{x}_i=\tilde{x}_i(x_1,x_2,\ldots,x_n),\quad i\in\{1,2,\ldots,n\}.
\end{split}
\end{equation}
In many situations of scientific interest the equations of motion of a dynamical system can be derived from a Lagrangian function $L:\mathcal{TM}\rightarrow\mathbb{R}$ via the Euler--Lagrange equations:
\begin{equation}\label{kcc-3}
\begin{split}
\frac{d}{dt}\frac{\partial L}{\partial y_i}-\frac{\partial L}{\partial x_i}=F_i,\quad i=1,2,\ldots,n,
\end{split}
\end{equation}
where $F_i$, $i=1,2,\ldots,n$, is the external force. For a regular Lagrangian $L$, the Euler--Lagrange equations introduced in \eqref{kcc-3} are equivalent to a system of second-order ordinary differential equations
\begin{equation}\label{kcc-4}
\begin{split}
\frac{d^2x_i}{dt^2}+2G^i(\boldsymbol{x},\boldsymbol{y},t)=0,\quad i\in\{1,2,\ldots,n\},
\end{split}
\end{equation}
where each function $G^i(\boldsymbol{x},\boldsymbol{y},t)$ is $\mathcal{C}^{\infty}$ in a neighborhood of some initial conditions $(\boldsymbol{x}_0,\boldsymbol{y}_0,t_0)$ in $\Omega$.

The basic idea of the KCC theory is that if an arbitrary system of second-order differential equations of the form \eqref{kcc-4} is given, with no a priori Lagrangian function known, we can still study the behavior of its trajectories by using differential geometric methods. This study can be performed by using the close analogy with the trajectories of the standard Euler--Lagrange system.

For the non-singular coordinate transformations introduced through \eqref{kcc-2}, we define the KCC-covariant differential of a vector field $\xi_i(\boldsymbol{x})$ on the open subset $\Omega$ as
\begin{equation}\label{kcc-5}
\begin{split}
\frac{D\xi_i}{dt}=\frac{d\xi_i}{dt}+N_j^i\xi_j,\nonumber
\end{split}
\end{equation}
where $N_j^i=\partial G^i/\partial y_j$ are the coefficients of the nonlinear connection. For $\xi_i=y_i$, we obtain
\begin{equation}\label{kcc-6}
\begin{split}
\frac{Dy_i}{dt}=N_j^iy_j-2G^i=-\epsilon_i.\nonumber
\end{split}
\end{equation}
The contravariant vector field $\epsilon_i$ on $\Omega$ is called the first KCC invariant.

We vary now the trajectories $x_i(t)$ of the system \eqref{kcc-4} into nearby ones according to
\begin{equation}\label{kcc-7}
\begin{split}
\tilde{x}_i(t)=x_i(t)+\eta\xi_i(t),
\end{split}
\end{equation}
where $|\eta|$ is a small parameter, and $\xi_i(t)$ are the components of a contravariant vector field defined along the path $x_i(t)$. Substituting \eqref{kcc-7} into \eqref{kcc-4} and taking the limit $\eta\rightarrow0$ we obtain the deviation equations in the form
\begin{equation}\label{kcc-8}
\begin{split}
\frac{d^2\xi_i}{dt^2}+2N_j^i\frac{d\xi_j}{dt}+2\frac{\partial G^i}{\partial x_j}\xi_j=0.
\end{split}
\end{equation}
Equation \eqref{kcc-8} can be reformulated in the covariant form with the use of the KCC-covariant differential as
\begin{equation}\label{kcc-9}
\begin{split}
\frac{D^2\xi_i}{dt^2}=P_j^i\xi_j,
\end{split}
\end{equation}
where we have denoted
\begin{equation}\label{kcc-10}
\begin{split}
P_j^i=-2\frac{\partial G^i}{\partial x_j}-2G^{\ell}G^i_{j\ell}+y_{\ell}\frac{\partial N_j^i}{\partial x_{\ell}}+N_{\ell}^iN_j^{\ell}+\frac{\partial N_j^i}{\partial t},
\end{split}
\end{equation}
and we have introduced the Berwald connection $G^i_{j\ell}$, defined as
\begin{equation}\label{kcc-11}
\begin{split}
G^i_{j\ell}\equiv\frac{\partial N_j^i}{\partial y_{\ell}}.
\end{split}
\end{equation}
$P^i_j$ is called the second KCC-invariant or the deviation curvature tensor, while equation \eqref{kcc-9} is called the Jacobi equation. The third, fourth and fifth KCC-invariants are called respectively the torsion tensor, the Riemann--Christoffel curvature tensor, and the Douglas
tensor are defined as
\begin{equation}\label{kcc-12}
\begin{split}
P^i_{jk}=\frac{1}{3}\left(\frac{\partial P^i_j}{\partial y_k}-\frac{\partial P^i_k}{\partial y_j}\right),\quad P^i_{ik\ell}=\frac{\partial P^i_{jk}}{\partial y_{\ell}},\quad D^i_{jk\ell}=\frac{\partial G^i_{jk}}{\partial y_{\ell}}.\nonumber
\end{split}
\end{equation}
These tensors always exist in a Berwald space. In the KCC theory they describe the geometrical properties and interpretation of a system of second-order differential equations (see \cite{PLA2003,SVS12005,SVS22005} and references therein).

For many physical, chemical or biological applications one is interested in the behaviors of the trajectories of the dynamical system \eqref{kcc-4} in a vicinity of a point $x_i(t_0)$. For simplicity in the following we take $t_0=0$. We consider the trajectories $x_i=x_i(t)$ of \eqref{kcc-4} as curves in the Euclidean space $(\mathbb{R}^n,\langle\cdot,\cdot\rangle)$, where $\langle\cdot,\cdot\rangle$ represents the canonical inner product of $\mathbb{R}^n$, and suppose that the deviation vector $\xi$ satisfies $\xi(0)=O\in\mathbb{R}^n$, $\dot{\xi}(0)=W\neq O$; here, $O$ is the null vector.

For any two vectors $X,Y\in\mathbb{R}^n$ we define an adapted inner product $\langle\langle\cdot,\cdot\rangle\rangle$ to the deviation tensor $\xi$ by $\langle\langle X,Y\rangle\rangle:=1/\langle W,W\rangle\cdot\langle X,Y\rangle$. We also have $||W||^2:=\langle\langle W,W\rangle\rangle=1$. Thus, the focusing tendency of the trajectories around $t_0=0$ can be described as follows:
\begin{itemize}
  \item bunching together if $||\xi(t)||<t^2$, namely, if and only if the real part of the eigenvalues of $P_j^i(0)$ are strictly negative.
  \item disperses if $||\xi(t)||>t^2$, namely, if and only if the real part of the eigenvalues of $P_j^i(0)$ are strictly positive.
\end{itemize}

Based on the above considerations we introduce the rigorous definition
of the concept of Jacobi stability for a dynamical system as follows.
\begin{definition}\label{kcc-def}
If the system \eqref{kcc-4} of differential equations satisfies the initial conditions $||x_i(t_0)-\tilde{x}_i(t_0)||=0$, $||\dot{x}_i(t_0)-\dot{\tilde{x}}_i(t_0)||\neq0$, with respect to the norm $||\cdot||$ induced by a positive-definite inner product, then the trajectories of \eqref{kcc-4} are Jacobi stable if and only if the real
part of the eigenvalues of the deviation curvature tensor $P_j^i$ are strictly negative everywhere. Otherwise, the trajectories are Jacobi unstable.
\end{definition}

More detailed discussions of the KCC theory, including applications, can be found in \cite{PLA2003,BHS2012,HHLY2015,HPPS2016}.

\section{Theoretical Results on the Jacobi Stability}\label{sect-main}

In this section, we present our main results on the Jacobi stability of system \eqref{eq1-0}. The following lemma gives necessary and sufficient conditions for a fixed point of system \eqref{eq1-0} to be Jacobi stable. Its proof can be found in Appendix \ref{sec-A}.

\begin{lemma}\label{lem-main-1}
Let $\bar{\boldsymbol{x}}=(\bar{x}_1,\bar{x}_2,\ldots,\bar{x}_n)$ be an isolated fixed point of system \eqref{eq1-0}. Denoted by $J(\bar{\boldsymbol{x}},\boldsymbol{\mu})$ the Jacobian matrix of system \eqref{eq1-0} evaluated at the fixed point $\bar{\boldsymbol{x}}$. Then $\bar{\boldsymbol{x}}$ is Jacobi stable if and only if all the eigenvalues
of the matrix $J^2(\bar{\boldsymbol{x}},\boldsymbol{\mu})$ have negative real parts.
\end{lemma}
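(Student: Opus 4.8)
The plan is to recast the first--order system \eqref{eq1-0} as a second--order system of the form \eqref{kcc-4} and then read off the deviation curvature tensor $P^i_j$ from the KCC formulas of Section \ref{sect2}. Differentiating $\dot{x}_i=f_i(\boldsymbol{x},\boldsymbol{\mu})$ with respect to $t$ and using the chain rule gives $\ddot{x}_i=\sum_{k=1}^n \frac{\partial f_i}{\partial x_k}\,f_k$, so that \eqref{eq1-0} induces the second--order system $\ddot{x}_i+2G^i=0$ with $2G^i(\boldsymbol{x})=-\sum_{k=1}^n \frac{\partial f_i}{\partial x_k}(\boldsymbol{x},\boldsymbol{\mu})\,f_k(\boldsymbol{x},\boldsymbol{\mu})$. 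The crucial structural feature is that these spray coefficients $G^i$ depend on $\boldsymbol{x}$ alone, not on $\boldsymbol{y}$ or $t$.

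Next I would compute the auxiliary geometric objects. Because $G^i$ is independent of $\boldsymbol{y}$, the nonlinear connection coefficients $N^i_j=\partial G^i/\partial y_j$ vanish identically, and consequently the Berwald connection $G^i_{j\ell}=\partial N^i_j/\partial y_\ell$ defined in \eqref{kcc-11} vanishes as well; since $G^i$ is also independent of $t$, the term $\partial N^i_j/\partial t$ in \eqref{kcc-10} drops out too. Substituting into \eqref{kcc-10} leaves only the first term, so
\begin{equation*}
P^i_j=-2\,\frac{\partial G^i}{\partial x_j}=\frac{\partial}{\partial x_j}\!\left(\sum_{k=1}^n \frac{\partial f_i}{\partial x_k}\,f_k\right)=\sum_{k=1}^n \frac{\partial^2 f_i}{\partial x_j\,\partial x_k}\,f_k+\sum_{k=1}^n \frac{\partial f_i}{\partial x_k}\,\frac{\partial f_k}{\partial x_j}.
\end{equation*}

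Then I would evaluate at the fixed point. Since $\bar{\boldsymbol{x}}$ is a fixed point of \eqref{eq1-0} we have $f_k(\bar{\boldsymbol{x}},\boldsymbol{\mu})=0$ for every $k$, so the Hessian sum above vanishes and $P^i_j(\bar{\boldsymbol{x}})=\sum_{k=1}^n \frac{\partial f_i}{\partial x_k}(\bar{\boldsymbol{x}},\boldsymbol{\mu})\,\frac{\partial f_k}{\partial x_j}(\bar{\boldsymbol{x}},\boldsymbol{\mu})$, i.e., the matrix $\big(P^i_j(\bar{\boldsymbol{x}})\big)$ equals $J^2(\bar{\boldsymbol{x}},\boldsymbol{\mu})$. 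To connect this with Definition \ref{kcc-def} one regards the fixed point itself as the constant solution curve $x_i(t)\equiv \bar{x}_i$ (for which $y_i\equiv 0$) of the induced second--order system; along this curve $P^i_j$ is the constant matrix $J^2(\bar{\boldsymbol{x}},\boldsymbol{\mu})$, and Definition \ref{kcc-def} says the trajectory is Jacobi stable precisely when all eigenvalues of this matrix have strictly negative real parts. This yields the claimed equivalence.

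The proof is essentially bookkeeping, and the only points that need care are (i) that the reduction of the ``everywhere'' clause in Definition \ref{kcc-def} to a single evaluation at $\bar{\boldsymbol{x}}$ is legitimate --- which is exactly the observation that $\bar{\boldsymbol{x}}$ is a (constant) trajectory and $P^i_j$ is constant along it --- and (ii) that $\boldsymbol{f}$ being merely rational causes no trouble, since an isolated fixed point lies in the domain of $\boldsymbol{f}$ and hence of its first and second partial derivatives, so all the expressions above are well defined in a neighborhood of $\bar{\boldsymbol{x}}$. I do not anticipate any genuine analytic obstacle; the main thing to get right is the vanishing of the connection terms and of the second--derivative term at the equilibrium.
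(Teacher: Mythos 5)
Your overall strategy (prolong to a second-order system, read off $P^i_j$ from \eqref{kcc-10}, evaluate at the fixed point, invoke Definition \ref{kcc-def}) is the same as the paper's, and your final matrix has the same eigenvalue signs as the paper's, but there is a genuine error in the middle step. You substitute $\dot x_k=f_k(\boldsymbol{x})$ \emph{before} reading off the spray coefficients, obtaining $2G^i=-\sum_k\frac{\partial f_i}{\partial x_k}f_k$ as a function of $\boldsymbol{x}$ alone, and from this you conclude that the nonlinear connection vanishes. In the KCC framework the prolonged system must be written as $\ddot x_i-\sum_k\frac{\partial f_i}{\partial x_k}(\boldsymbol{x})\,y_k=0$ with $y_k$ kept as an independent fibre coordinate: the invariants involve derivatives $\partial/\partial y_j$, so two choices of $G^i$ that agree on the section $\{\boldsymbol{y}=\boldsymbol{f}(\boldsymbol{x})\}$ but differ off it yield different connections and different curvature tensors. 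With the correct (and the paper's) choice $G^i=-\tfrac12\sum_k f_{ik}\,y_k$ (writing $f_{ij}=\partial f_i/\partial x_j$ and $f_{ij\ell}=\partial f_{ij}/\partial x_\ell$), one gets a nonvanishing connection $N^i_j=-\tfrac12 f_{ij}$, the Berwald connection still vanishes, and \eqref{kcc-10} leaves three terms, which combine to $P^i_j=\tfrac12\sum_\ell f_{ij\ell}\,y_\ell+\tfrac14\sum_\ell f_{i\ell}f_{\ell j}$. What kills the extra terms at the equilibrium is not the vanishing of the connection but the vanishing of $y_\ell=f_\ell(\bar{\boldsymbol{x}})$ along the constant trajectory, and the result is $P=\tfrac14J^2(\bar{\boldsymbol{x}},\boldsymbol{\mu})$, not $J^2$.

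Because $\tfrac14J^2$ and $J^2$ have identical eigenvalue signs, your stated conclusion survives, but the derivation as written does not compute the deviation curvature tensor of the system the lemma is about: your $P^i_j$ disagrees with the paper's away from $\bar{\boldsymbol{x}}$, and the deviation equation you would obtain from your prolongation lacks the $2N^i_j\dot\xi_j$ term of \eqref{kcc-8}, so it is not the Jacobi equation underlying Definition \ref{kcc-def} for this system. The agreement at the fixed point is an accident of the equilibrium condition, not a validation of the substitution. Your ancillary points (i) and (ii) --- reducing the ``everywhere'' clause to a single evaluation via the constant trajectory, and the harmlessness of $\boldsymbol{f}$ being rational near an isolated fixed point --- are fine and consistent with the paper's treatment.
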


The following result is an immediate consequence of \cite[Theorem 2]{HPPS2016}, which shows that only \textit{even-dimensional} first-order dynamical systems can exhibit Jacobi stability.

\begin{theorem}\label{thm-kccmain-1}
If $n=2j+1$, $j\in\mathbb{N}$, then the fixed point $\bar{\boldsymbol{x}}$ of system \eqref{eq1-0} is always Jacobi unstable. If $n=2j$, then $\bar{\boldsymbol{x}}$ is Jacobi stable if and only if system \eqref{eq1-0} satisfies the following two conditions:
\begin{itemize}
  \item[(a)] all eigenvalues of the characteristic polynomial of Jacobian matrix $J(\bar{\boldsymbol{x}},\boldsymbol{\mu})$ are complex conjugate, i.e., $\lambda_j:=\alpha_j+i\beta_j$, $\bar{\lambda}_j:=\alpha_j-i\beta_j$, $j=1,\ldots,n/2$;
  \item[(b)] for any $j=1,\ldots,n/2$, we have $\alpha_j^2-\beta_j^2<0$.
\end{itemize}
\end{theorem}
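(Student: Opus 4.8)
The plan is to derive the whole statement from Lemma \ref{lem-main-1}, which reduces Jacobi stability of $\bar{\boldsymbol{x}}$ to the condition that every eigenvalue of $J^2(\bar{\boldsymbol{x}},\boldsymbol{\mu})$ has strictly negative real part. The first step is to transfer this condition from $J^2$ back to $J:=J(\bar{\boldsymbol{x}},\boldsymbol{\mu})$. Triangularizing $J$ over $\mathbb{C}$ (Schur or Jordan form), the matrix $J^2$ is upper-triangular with diagonal entries the squares of those of $J$; hence the eigenvalues of $J^2$, counted with algebraic multiplicity, are exactly $\{\lambda^2 : \lambda \text{ an eigenvalue of } J\}$ (equivalently, apply the spectral mapping theorem to $z\mapsto z^2$). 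Writing an eigenvalue of $J$ as $\lambda=\alpha+i\beta$ with $\alpha,\beta\in\mathbb{R}$, we get $\lambda^2=(\alpha^2-\beta^2)+2i\alpha\beta$, so $\operatorname{Re}(\lambda^2)=\alpha^2-\beta^2$. Thus, by Lemma \ref{lem-main-1}, $\bar{\boldsymbol{x}}$ is Jacobi stable if and only if $\alpha^2-\beta^2<0$ holds for every eigenvalue $\alpha+i\beta$ of $J$.

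For the odd-dimensional case $n=2j+1$: the characteristic polynomial of $J$ has real coefficients and odd degree $n$, so it has at least one real root $\lambda=\alpha$, i.e. with $\beta=0$. For this eigenvalue $\alpha^2-\beta^2=\alpha^2\geq 0$ (whether $\alpha=0$ or $\alpha\neq 0$), which violates the criterion just obtained. Hence $\bar{\boldsymbol{x}}$ is always Jacobi unstable when $n$ is odd.

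For the even-dimensional case $n=2j$, I would prove the two implications separately. For the \emph{if} direction, assuming (a) and (b), the eigenvalues are $\lambda_j=\alpha_j+i\beta_j$ and $\bar\lambda_j=\alpha_j-i\beta_j$, $j=1,\ldots,n/2$, and $\operatorname{Re}(\lambda_j^2)=\operatorname{Re}(\bar\lambda_j^2)=\alpha_j^2-\beta_j^2<0$ by (b) for both members of each pair, so the criterion holds and $\bar{\boldsymbol{x}}$ is Jacobi stable. For the \emph{only if} direction, assume $\bar{\boldsymbol{x}}$ is Jacobi stable, so $\alpha^2-\beta^2<0$ for every eigenvalue $\alpha+i\beta$ of $J$. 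No eigenvalue can be real, since $\beta=0$ would force $\alpha^2<0$; as $J$ is a real matrix its non-real eigenvalues occur in conjugate pairs, and since all $n$ of them are non-real they split into $n/2$ conjugate pairs $\alpha_j\pm i\beta_j$ with $\beta_j\neq 0$, which is (a), while $\alpha_j^2-\beta_j^2<0$ is precisely (b).

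The argument is essentially elementary once Lemma \ref{lem-main-1} is available; the only points needing a little care are the passage between the spectra of $J$ and $J^2$ (handled by triangularization so that multiplicities are tracked correctly) and the remark that conditions (a)--(b) implicitly force $\beta_j\neq 0$, so ``complex conjugate'' in (a) genuinely means non-real. I do not expect a substantive obstacle here; one could alternatively just invoke \cite[Theorem 2]{HPPS2016} together with Definition \ref{kcc-def}, but routing the proof through Lemma \ref{lem-main-1} keeps the exposition self-contained.
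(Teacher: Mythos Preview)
Your proof is correct and follows essentially the same underlying approach as the paper: both rest on the spectral relationship between $J$ and $J^2$ (eigenvalues of $J^2$ are the squares of those of $J$, with $\operatorname{Re}(\lambda^2)=\alpha^2-\beta^2$), which the paper imports by citing \cite[Theorem~2 and Lemma~2]{HPPS2016} while you derive it directly from Lemma~\ref{lem-main-1} and an explicit triangularization. Your version is in fact more complete and self-contained than the paper's---the paper only sketches the odd case and defers the even case entirely to \cite{HPPS2016}---and your care with the $\alpha=0$ possibility in the odd case and with the implicit $\beta_j\neq 0$ in condition~(a) is well placed.
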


\begin{proof}
The result follows directly from \cite[Theorem 2]{HPPS2016}. Here we only present the proof for the case $n=2j+1$ odd, more detailed arguments can be found in \cite[Section 4.2]{HPPS2016}. In fact, for \textit{odd-dimensional} first-order dynamical systems of the form \eqref{eq1-0}. The resulting characteristic polynomial of the Jacobian matrix $J(\bar{\boldsymbol{x}},\boldsymbol{\mu})$ has at least one real root. It follows from \cite[Lemma 2]{HPPS2016} that the matrix $J^2(\bar{\boldsymbol{x}},\boldsymbol{\mu})$ has at least one real positive root. That is, the fixed point $\bar{\boldsymbol{x}}$ is Jacobi unstable.
\end{proof}

\begin{remark}\label{rem-kcc-3-3}
In practice, the results established in Theorem \ref{thm-kccmain-1} are not suitable for application to real-world differential models. The main reason is that the root structure analysis of the characteristic polynomial of $J(\bar{\boldsymbol{x}},\boldsymbol{\mu})$ is very difficult, especially for parametric dynamical systems with higher dimensions. In this paper, we provide a systematic approach for analyzing the existence of Jacobi stability of dynamical systems by using the method of quantifier elimination automatically (see Proposition \ref{kcc-prop-1}). We also remark that for special systems of ODEs of the form \eqref{eq1-0}, one may reduce an \textit{odd-dimensional} first-order dynamical system to certain set of second-order differential equations; in such case an odd-dimensional first-order dynamical system could exhibit Jacobi stability without knowing the information on the matrix $J(\bar{\boldsymbol{x}},\boldsymbol{\mu})$, see \cite{HHLY2015,GY2016,MLTH2016,GY2017,HLL2019} and references therein.
\end{remark}

Our main question, relevant to the algorithmic aspects of Jacobi stability, then arises: \textit{under what conditions does a dynamical system of the form \eqref{eq1-0} have a prescribed number of Jacobi stable fixed points}? In the following we provide two algorithmic schemes to answer the question by means of symbolic and algebraic computation.

The first scheme addresses the existence of Jacobi stable fixed points, i.e., the problem of determining the conditions on the parameters $\boldsymbol{\mu}$ for a system of the form \eqref{eq1-0} to have at least one Jacobi stable fixed point. The next result is based on Theorem \ref{thm-kccmain-1}, which provides an algorithmic approach for testing Jacobi stability by quantifier elimination (QE). %The proof is detailed in Appendix \ref{sec-kccprop-B}.

\begin{proposition}\label{kcc-prop-1}
The problem of detecting the existence of Jacobi stable fixed points of a $2m$-dimensional dynamical system \eqref{eq1-0} can be formulated into the following QE problem
\begin{equation}\label{kcc-QEP}
\begin{split}
 \exists &b_1\cdots\exists b_m\exists c_1\cdots\exists c_m\exists \bar{\boldsymbol{x}}\\
&\,\Bigg[\left(\bigwedge_{i=1}^{2m}f_{i,1}(\bar{\boldsymbol{x}},\boldsymbol{\mu})=0\right) \,\,\bigwedge\left(\bigwedge_{i=1}^{2m}f_{i,2}(\bar{\boldsymbol{x}},\boldsymbol{\mu})\neq0\right)\\
&\,\,\bigwedge\left(\bigwedge_{j=1}^{m}a_{j}(\bar{\boldsymbol{x}},\boldsymbol{\mu})-a_j(b_1,\ldots,b_m;c_1,\ldots,c_m)=0\right)\\
&\,\,\bigwedge\left(\bigwedge_{j=1}^{m}2c_j-b_j^2>0\right)\Bigg],
\end{split}
\end{equation}
where $f_{i,1}(\bar{\boldsymbol{x}},\boldsymbol{\mu})$ and $f_{i,2}(\bar{\boldsymbol{x}},\boldsymbol{\mu})$ are respectively the numerator and denominator of the function $f_{i}(\bar{\boldsymbol{x}},\boldsymbol{\mu})$ in \eqref{eq1-0}.
\end{proposition}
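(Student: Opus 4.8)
The plan is to show that the displayed first-order sentence \eqref{kcc-QEP} is logically equivalent to the statement ``system \eqref{eq1-0} has at least one Jacobi stable fixed point,'' after which the fact that it \emph{is} a QE problem (a prenex first-order formula over the reals with polynomial atoms) is immediate. First I would fix notation: for a $2m$-dimensional system, the characteristic polynomial of $J(\bar{\boldsymbol{x}},\boldsymbol{\mu})$ has degree $2m$; write it as $\chi(\lambda)=\lambda^{2m}+a_1(\bar{\boldsymbol{x}},\boldsymbol{\mu})\lambda^{2m-1}+\cdots+a_{2m}(\bar{\boldsymbol{x}},\boldsymbol{\mu})$, where the $a_j$ are polynomials in the entries of $J$, hence rational functions of $\bar{\boldsymbol{x}}$ and $\boldsymbol{\mu}$ (and one clears denominators, using that $\bar{\boldsymbol{x}}$ is a common zero of the numerators $f_{i,1}$ and a nonzero of the denominators $f_{i,2}$). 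By Theorem \ref{thm-kccmain-1}, $\bar{\boldsymbol{x}}$ is Jacobi stable iff all $2m$ eigenvalues occur in conjugate pairs $\alpha_j\pm i\beta_j$ with $\alpha_j^2-\beta_j^2<0$ for each $j$.

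The key step is to encode ``all roots of $\chi$ are in conjugate pairs satisfying $\alpha_j^2<\beta_j^2$'' by the \emph{existence} of real numbers $b_1,\dots,b_m$ and $c_1,\dots,c_m$ such that $\chi(\lambda)=\prod_{j=1}^{m}(\lambda^2+b_j\lambda+c_j)$. Indeed, a monic real polynomial of degree $2m$ has all its roots in conjugate complex pairs iff it factors over $\mathbb{R}$ into $m$ monic real quadratics $\lambda^2+b_j\lambda+c_j$, \emph{each with negative discriminant} $b_j^2-4c_j<0$; the two roots of such a quadratic are $-\tfrac{b_j}{2}\pm\tfrac{i}{2}\sqrt{4c_j-b_j^2}$, i.e. $\alpha_j=-b_j/2$, $\beta_j=\tfrac12\sqrt{4c_j-b_j^2}$. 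Then $\alpha_j^2-\beta_j^2=\tfrac14(b_j^2-(4c_j-b_j^2))=\tfrac14(2b_j^2-4c_j)=\tfrac12(b_j^2-2c_j)$, so condition (b), $\alpha_j^2-\beta_j^2<0$, is exactly $2c_j-b_j^2>0$ — and since $2c_j>b_j^2\ge 0$ forces $c_j>0$ and a fortiori $4c_j>b_j^2$, the negative-discriminant requirement (condition (a)) is automatically subsumed. This is why the single inequality block $\bigwedge_j 2c_j-b_j^2>0$ suffices and no separate ``$b_j^2-4c_j<0$'' conjunct is needed. The equality of $\chi$ with $\prod_j(\lambda^2+b_j\lambda+c_j)$ is then expressed by matching coefficients: writing $a_j(b_1,\dots,b_m;c_1,\dots,c_m)$ for the coefficient of $\lambda^{2m-j}$ in the product $\prod_{j=1}^m(\lambda^2+b_j\lambda+c_j)$ (an explicit polynomial in the $b$'s and $c$'s), the system $\bigwedge_{j=1}^{m} a_j(\bar{\boldsymbol{x}},\boldsymbol{\mu})=a_j(b_1,\dots,b_m;c_1,\dots,c_m)$ holds iff the two degree-$2m$ monic polynomials coincide. (There are $2m$ coefficients but the index runs to $m$: a short remark is needed here — one either uses all $2m$ coefficient equations, or notes that the intended indexing is over the $2m$ elementary symmetric-type coefficients; I would align the statement with whichever the authors use and simply say ``matching all coefficients of $\chi$'' to be safe.)

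Assembling the pieces: the conjunction $\bigwedge_i f_{i,1}(\bar{\boldsymbol{x}},\boldsymbol{\mu})=0 \wedge \bigwedge_i f_{i,2}(\bar{\boldsymbol{x}},\boldsymbol{\mu})\neq 0$ says precisely that $\bar{\boldsymbol{x}}$ is a (well-defined, i.e. denominators non-vanishing) fixed point of \eqref{eq1-0}; the coefficient-matching block together with the existence of the $b_j,c_j$ says the characteristic polynomial of $J(\bar{\boldsymbol{x}},\boldsymbol{\mu})$ factors into real quadratics, i.e. all eigenvalues are in conjugate pairs (condition (a)); and $\bigwedge_j 2c_j-b_j^2>0$ is condition (b) rewritten via $\alpha_j=-b_j/2$, $c_j=\alpha_j^2+\beta_j^2$. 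Hence the bracketed matrix formula holds for some choice of $\bar{\boldsymbol{x}},b_j,c_j$ iff \eqref{eq1-0} possesses a Jacobi stable fixed point, which by Theorem \ref{thm-kccmain-1} and Lemma \ref{lem-main-1} is the desired characterization; existentially quantifying $b_1,\dots,b_m,c_1,\dots,c_m,\bar{\boldsymbol{x}}$ then yields exactly \eqref{kcc-QEP}, a sentence in the first-order theory of real-closed fields, to which any QE algorithm applies. The main obstacle is not depth but bookkeeping precision: one must be careful that clearing the denominators of the rational coefficients $a_j$ (using the non-vanishing of the $f_{i,2}$) does not change the root set, that the quadratic-factorization criterion is stated with the correct discriminant-sign condition and that its interaction with $2c_j-b_j^2>0$ is spelled out, and that the number and indexing of the coefficient-matching equations is consistent with the degree $2m$; once these are handled the equivalence is routine.
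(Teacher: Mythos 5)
Your proposal is correct and follows essentially the same route as the paper's own proof: factor the characteristic polynomial into $m$ real quadratics $\lambda^2+b_j\lambda+c_j$, match coefficients, translate condition (b) of Theorem \ref{thm-kccmain-1} into $2c_j-b_j^2>0$ via $\alpha_j=-b_j/2$, $\beta_j=\sqrt{4c_j-b_j^2}/2$, and observe that this inequality subsumes the negative-discriminant requirement. Your remark about the coefficient-matching index is well taken — the paper's proof indeed uses all $2m$ equations $a_i(\bar{\boldsymbol{x}},\boldsymbol{\mu})=a_i(b_1,\ldots,b_m;c_1,\ldots,c_m)$, $i=1,\ldots,2m$, so the upper limit $m$ in the displayed formula \eqref{kcc-QEP} is a typographical slip that your version correctly repairs.
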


\begin{proof}
As indicated by Theorem \ref{thm-kccmain-1}, we assume that $n=2m$ for some $m\in\mathbb{N}_{\ge 0}$. The characteristic polynomials $p(\lambda)$ of the matrix $J(\bar{\boldsymbol{x}},\boldsymbol{\mu})$ can be written as
\begin{equation}\label{kcc-p-eq1}
p(\lambda)=\lambda^{2m}+a_1(\bar{\boldsymbol{x}},\boldsymbol{\mu})\lambda^{2m-1}+\cdots+a_{2m}(\bar{\boldsymbol{x}},\boldsymbol{\mu}).
\end{equation}
Then all eigenvalues of $p$ are complex conjugates if and only if there exist $b_j$, $c_j$ for $j=1,\ldots,m$ such that
\begin{equation}\label{kcc-p-eq2}
p(\lambda)=\prod_{j=1}^m(\lambda^2+b_j\lambda+c_j),
\end{equation}
where
\begin{equation}\label{kcc-p-eq3}
b_j^2-4c_j<0.
\end{equation}
Taking the expansion of \eqref{kcc-p-eq2} and comparing its coefficients in terms of $\lambda$ with those in \eqref{kcc-p-eq1}, we may get a series of constraints on $b_j$, $c_j$ which can be simply written as \begin{equation}\label{kcc-p-eq4}
a_i(\bar{\boldsymbol{x}},\boldsymbol{\mu})=a_i(b_1,\ldots,b_m;c_1,\ldots,c_m),\quad i=1,\ldots,2m.\nonumber
\end{equation}

We further assume that $\alpha_j\pm\beta_j i$ are the conjugate roots of $\lambda^2+b_j\lambda+c_j=0$. More explicitly, we have
$\alpha_j=-b_j/2$ and $\beta_j=\sqrt{4c_j-b_j^2}/2$. The substitution of the two relations into $\alpha_j^2<\beta_j^2$ immediately yields
\begin{equation}\label{kcc-p-eq5}
2c_j-b_j^2>0.
\end{equation}
Obviously, the condition \eqref{kcc-p-eq3} is implied by the condition \eqref{kcc-p-eq5} and thus we can omit it. Collecting all the constraints on $b_j$'s and $c_j$'s and using Theorem \ref{thm-kccmain-1}, we formulate the desired QE problem \eqref{kcc-QEP}.
\end{proof}

\begin{remark}\label{kcc-rem-qe}
By Tarski theory, the QE problem of the first-order theory is solvable over the real field \cite{Tarski:51}. In other words, for any formula constructed from polynomial equations and inequalities over the  real field by logical connectives (i.e., $\wedge$ (and), $\vee$ (or), $\neg$ ) and quantifiers (i.e., $\forall$, $\exists$), one can find a quantifier-free equivalence to the given formula. The first practical algorithm for QE is based on Cylindrical Algebraic Decomposition (CAD), which is proposed by Collins in 1975 \cite{Collins:75b}. Following up his work, numerous improvements have been made since then (e.g., to cite a few \cite{Hong:90a,BPR98,McCallum:99,Brown_Gross:2006,Hong_Safey_El_Din:2012,Chen_Maza:2016}) and several software tools are implemented, such as \textsf{QEPCAD B} \cite{qepcadb} (an improvement of \textsf{QEPCAD } \cite{Collins_Hong:91}),
functions \textsf{Resolve} and \textsf{Reduce} in Mathematica (as described in \cite{AS:2010}), package REDLOG in \textsf{REDUCE} (implemented on the basis of the methods of CAD \cite{Collins:75b,Collins_Hong:91} and virtual term substitution \cite{VW:1994,VW:1997}), and the Maple package \textsf{RegularChains} (developed by Moreno Maza and his coworkers).
\end{remark}

Our second algorithmic scheme is devoted to providing necessary and sufficient conditions on $\boldsymbol{\mu}$ for systems of the form \eqref{eq1-0} to have given numbers of Jacobi stable fixed points. According to Lemma \ref{lem-main-1}, we need to determine whether all the eigenvalues of $J^2(\bar{\boldsymbol{x}},\boldsymbol{\mu})$ have negative real parts. This can be done by using the stability criterion of Routh--Hurwitz \cite{RMAM1982,PLMT1985}. Let
\begin{equation}\label{kcc-main-JJ}
\begin{split}
\bar{p}(\lambda)=\lambda^n+\bar{a}_1(\bar{\boldsymbol{x}},\boldsymbol{\mu})\lambda^{n-1}+\cdots+\bar{a}_n(\bar{\boldsymbol{x}},\boldsymbol{\mu})
\end{split}
\end{equation}
be the characteristic polynomial of the matrix $J^2(\bar{\boldsymbol{x}},\boldsymbol{\mu})$. The Routh--Hurwitz criterion reduces the problem of determining the negative signs of the real parts of the eigenvalues of $J^2$ to the problem of determining the signs of certain coefficients $a_i$ of $\bar{p}(\lambda)$ and the signs of certain determinants $\Delta_j$ of matrices with $a_i$ or 0 as entries.

The necessary and sufficient conditions for $\bar{p}(\lambda)$ to have all solutions such that ${\rm{Re}}(\lambda)<0$ can be written as
\begin{equation}\label{kcc-main-12}
\begin{split}
&\bar{a}_n>0,\quad \Delta_1=\bar{a}_1>0,\quad \Delta_2=\left|
 \begin{matrix}
    \bar{a}_1& \bar{a}_3 \\
   1 & \bar{a}_2 \\
  \end{matrix}
  \right|>0,\\
 &\Delta_3=\left|
 \begin{matrix}
    \bar{a}_1& \bar{a}_3 & \bar{a}_5\\
   1 & \bar{a}_2 & \bar{a}_4 \\
   0 & \bar{a}_1 & \bar{a}_3 \\
  \end{matrix}
  \right|>0,\ldots,\\
  &\Delta_j=\left|
 \begin{matrix}
    \bar{a}_1& \bar{a}_3 & \cdots & \cdots\\
   1 & \bar{a}_2 & \bar{a}_4 & \cdots\\
   0 & \bar{a}_1 & \bar{a}_3 & \cdots\\
   0 & 1 & \bar{a}_2 & \cdots\\
   \cdots & \cdots & \cdots & \cdots\\
   0 & 0 & \cdots & \bar{a}_j\\
  \end{matrix}
  \right|>0,\ldots,\\
\end{split}
\end{equation}
for all $j=1,2,\ldots,n$. Here $\Delta_1,\ldots,\Delta_n$ are known as the \textit{Hurwitz determinants} of $\bar{p}(\lambda)$.

The following result gives necessary and sufficient conditions for a given system of the form \eqref{eq1-0} to have a prescribed number of Jacobi stable fixed points.
\begin{theorem}\label{semi-kcc}
For a general $n$-dimensional dynamical system of the form \eqref{eq1-0}, the system has exactly $k$ Jacobi stable fixed points if and only if the following semi-algebraic system
\begin{equation}\label{kcc-main-13}
\left\{\begin{array}{l}
f_{1,1}(\bar{\boldsymbol{x}},\boldsymbol{\mu})=f_{2,1}(\bar{\boldsymbol{x}},\boldsymbol{\mu})
=\cdots=f_{n,1}(\bar{\boldsymbol{x}},\boldsymbol{\mu})=0,\\ f_{i,2}(\bar{\boldsymbol{x}},\boldsymbol{\mu})\neq0,\quad i=1,\ldots,n,\\
\bar{a}_n(\bar{\boldsymbol{x}},\boldsymbol{\mu})>0,\quad\Delta_j(\bar{\boldsymbol{x}},\boldsymbol{\mu})>0,\quad j=1,\ldots,n\\
\end{array}\right.
\end{equation}
has exactly $k$ distinct real solutions with respect to the variable $\bar{\boldsymbol{x}}$, where $\boldsymbol{\mu}=(\mu_1,\ldots,\mu_p)$ are parameters appearing in \eqref{eq1-0}.
\end{theorem}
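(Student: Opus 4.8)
\emph{Proof proposal.} The plan is to prove the sharper statement that, for each fixed parameter value $\boldsymbol{\mu}$, the set of real solutions of the semi-algebraic system \eqref{kcc-main-13} in the unknowns $\bar{\boldsymbol{x}}$ \emph{coincides} with the set of Jacobi stable fixed points of \eqref{eq1-0}; counting the two sides then yields the theorem. I would organize the argument around two independent equivalences. First, the equations $f_{i,1}(\bar{\boldsymbol{x}},\boldsymbol{\mu})=0$ together with the inequations $f_{i,2}(\bar{\boldsymbol{x}},\boldsymbol{\mu})\neq0$ for $i=1,\dots,n$ single out exactly the points $\bar{\boldsymbol{x}}\in\mathbb{R}^n$ at which the rational vector field $\boldsymbol{f}(\cdot,\boldsymbol{\mu})$ is defined and vanishes, i.e.\ the fixed points of \eqref{eq1-0}. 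Second, for such a point the remaining strict inequalities $\bar{a}_n(\bar{\boldsymbol{x}},\boldsymbol{\mu})>0$ and $\Delta_j(\bar{\boldsymbol{x}},\boldsymbol{\mu})>0$, $j=1,\dots,n$, hold if and only if $\bar{\boldsymbol{x}}$ is Jacobi stable.

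For the first equivalence, write each $f_i$ as the quotient $f_{i,1}/f_{i,2}$ of its numerator and denominator in lowest terms. Where $f_{i,2}(\bar{\boldsymbol{x}},\boldsymbol{\mu})\neq0$ for every $i$, the vector field $\boldsymbol{f}(\cdot,\boldsymbol{\mu})$ is defined at $\bar{\boldsymbol{x}}$ and $f_i(\bar{\boldsymbol{x}},\boldsymbol{\mu})=0$ is equivalent to $f_{i,1}(\bar{\boldsymbol{x}},\boldsymbol{\mu})=0$; conversely a fixed point lying in the domain of $\boldsymbol{f}(\cdot,\boldsymbol{\mu})$ automatically satisfies all the inequations. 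Hence the zero set of the algebraic part of \eqref{kcc-main-13} is precisely the set of fixed points of \eqref{eq1-0}. I would note here that we are tacitly in the situation where these fixed points are isolated --- which is exactly the hypothesis under which Lemma \ref{lem-main-1} is stated and under which the integer $k$ is meaningful; if $\boldsymbol{f}(\cdot,\boldsymbol{\mu})$ had a positive-dimensional zero set meeting its domain, both sides of the equivalence would be infinite and the statement would still hold.

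For the second equivalence, fix such an isolated fixed point $\bar{\boldsymbol{x}}$. Since the denominators $f_{i,2}$ are nonzero there, the entries $\partial f_i/\partial x_j$ of the Jacobian $J(\bar{\boldsymbol{x}},\boldsymbol{\mu})$, and hence those of $J^2(\bar{\boldsymbol{x}},\boldsymbol{\mu})$, are well-defined real numbers, and $\bar{p}(\lambda)$ in \eqref{kcc-main-JJ} is the monic characteristic polynomial of $J^2(\bar{\boldsymbol{x}},\boldsymbol{\mu})$ with real coefficients $\bar{a}_i(\bar{\boldsymbol{x}},\boldsymbol{\mu})$. By Lemma \ref{lem-main-1}, $\bar{\boldsymbol{x}}$ is Jacobi stable if and only if every eigenvalue of $J^2(\bar{\boldsymbol{x}},\boldsymbol{\mu})$ --- equivalently every root of $\bar{p}(\lambda)$ --- has strictly negative real part. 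Applying the Routh--Hurwitz criterion to the monic real polynomial $\bar{p}(\lambda)$, this is in turn equivalent to $\bar{a}_n(\bar{\boldsymbol{x}},\boldsymbol{\mu})>0$ together with $\Delta_1(\bar{\boldsymbol{x}},\boldsymbol{\mu})>0,\dots,\Delta_n(\bar{\boldsymbol{x}},\boldsymbol{\mu})>0$, the Hurwitz determinants recalled in \eqref{kcc-main-12}. (The extra condition $\bar{a}_n>0$ is, up to the positive factor $\Delta_{n-1}$, the same as $\Delta_n>0$, so listing it changes nothing.) Combining the two equivalences, the real solution set of \eqref{kcc-main-13} equals the set of Jacobi stable fixed points of \eqref{eq1-0}, and in particular there are exactly $k$ of the former if and only if there are exactly $k$ of the latter.

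I do not anticipate a genuine obstacle: the theorem is essentially the concatenation of Lemma \ref{lem-main-1} with the Routh--Hurwitz criterion, and the only points requiring care are (a) the precise translation of ``fixed point at which $\boldsymbol{f}$ is defined'' into the equation/inequation pair on numerators and denominators, and (b) the isolation caveat above, needed so that Lemma \ref{lem-main-1} applies and the counting is well posed. A minor bookkeeping point is to record that $\bar{p}(\lambda)$ is real and monic, so that the Hurwitz conditions in the form \eqref{kcc-main-12} apply verbatim.
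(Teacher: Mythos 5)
Your proposal is correct and follows essentially the same route as the paper, whose proof is simply the observation that the theorem is Lemma \ref{lem-main-1} combined with the Routh--Hurwitz criterion; you merely spell out the two equivalences (fixed points $\leftrightarrow$ the equation/inequation pair on numerators and denominators, and Jacobi stability $\leftrightarrow$ the Hurwitz conditions on $\bar{p}$) that the paper leaves implicit. The added remarks on isolation and on the redundancy between $\bar{a}_n>0$ and $\Delta_n>0$ are accurate but not needed beyond what the paper intends.
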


\begin{proof}
The result follows directly from Lemma \ref{lem-main-1} and Routh--Hurwitz criterion.
\end{proof}

\begin{remark}\label{rem-semi-1}
In our approach, we do not explicitly compute the fixed points, nor the corresponding eigenvalues of $\bar{p}(\lambda)$. By Theorem \ref{semi-kcc}, the problem of Jacobi stability analysis is reduced to that of solving a semi-algebraic system (namely \eqref{kcc-main-13}). Theorem \ref{semi-kcc} provides a straightforward computational method to verify whether a given differential system has a prescribed number of Jacobi stable fixed points. Its main task is to find conditions on the parameters $\boldsymbol{\mu}$ for system \eqref{kcc-main-13} to have exactly $k$ distinct real solutions. In the next section, we will give a systematic approach for solving semi-algebraic systems and analyzing Jacobi stability of dynamical systems by using symbolic computation methods automatically.
\end{remark}

\section{Algorithmic Analysis of the Jacobi Stability}\label{sect4}
Based on the theoretical results established in Section \ref{sect-main}, we present our algorithmic tests {\bf Jacobi Test I} and {\bf Jacobi Test II}. Given an input dynamical system of the form \eqref{eq1-0}, these algorithmic tests verify its Jacobi stability by using QE formula \eqref{kcc-QEP} and solving semi-algebraic system \eqref{kcc-main-13}, respectively. In so doing, the first test is applicable only to verify the existence of Jacobi stable fixed points. The second test admits one to determine conditions on the parameters such that the given system has a prescribed number of Jacobi stable fixed points.

\subsection{Jacobi Test I}
The process of using the first algorithmic test for analyzing the existence of Jacobi stability of a given dynamical system can be divided into three parts.

{\bf Part 1:} Compute the Jacobian matrix $J(\bar{\boldsymbol{x}},\boldsymbol{\mu})$ of a given dynamical system of the form \eqref{eq1-0} and write the characteristic polynomial $p(\lambda)$ of $J$.

{\bf Part 2:} Formulate a QE problem using \eqref{kcc-QEP}.

{\bf Part 3:} Solve the problem by QE tools.

It is well known that real QE can be carried out algorithmically due to the ground-breaking work of Tarski \cite{Tarski:51}, Collins \cite{Collins:75b}, and others. See \cite{Hong:90a,BPR98,McCallum:99,Brown_Gross:2006,Hong_Safey_El_Din:2012,Chen_Maza:2016} for further information. In principle, the above problem in {\bf Part 3} can be carried out automatically by using QE algorithms. However, the symbolic computation involved is so huge that automatic synthesis is practically impossible with currently available QE software.

We remark that the problem of detecting the existence of Jabobi stability of a given dynamical system, i.e., the quantifier elimination problem (see \eqref{kcc-QEP}) may be formulated as the problem of determining the conditions on the parameters $\boldsymbol{\mu}$ for the following semi-algebraic system
\begin{equation}\label{kcc-main-test-1}
\left\{\begin{array}{l}
f_{1,1}(\bar{\boldsymbol{x}},\boldsymbol{\mu})=f_{2,1}(\bar{\boldsymbol{x}},\boldsymbol{\mu})
=\cdots=f_{2m,1}(\bar{\boldsymbol{x}},\boldsymbol{\mu})=0,\\ f_{i,2}(\bar{\boldsymbol{x}},\boldsymbol{\mu})\neq0,\quad i=1,\ldots,2m,\\
a_{j}(\bar{\boldsymbol{x}},\boldsymbol{\mu})-a_j(b_1,\ldots,b_m;c_1,\ldots,c_m)=0,\\
2c_j-b_j^2>0,\quad j=1,2,\ldots,m
\end{array}\right.
\end{equation}
to have at least one real solution. The general approach for solving such semi-algebraic systems is explained in detail in \cite{WDMXBC2005,niuwang08} and it is based on the methods of real solution classification \cite{LYBX05} and discriminant varieties \cite{DLFR}.

In view of the above remark, in the algorithmic analysis of Jacobi stability, our attention will be focused on the algorithmic steps for solving semi-algebraic systems (see details in Section \ref{sec-kcc-test2}).

\subsection{Jacobi Test II}\label{sec-kcc-test2}

Our purpose is to derive conditions on the parameters for systems of the form \eqref{eq1-0} to have given numbers of Jacobi stability fixed points. In what follows we present a general algorithmic approach for automatically analyzing Theorem \ref{semi-kcc} by using methods from symbolic computation. This approach is based on the one for studying Lyapunov stability of biological systems proposed by Wang, Xia and Niu \cite{WDMXBC2005,niuwang08}. The main steps of our computational approach are described as follows.

\smallskip

{\bf STEP 1}. Formulate the semi-algebraic system \eqref{kcc-main-13} from a dynamical system of the form \eqref{eq1-0}. Denote by $\mathcal{S}$ the semi-algebraic system for solving, $\Psi$ the set of inequalities of $\mathcal{S}$, $\mathcal{F}$ the set of polynomials in $\Psi$, and $\mathcal{P}$ the set of polynomials in the equations of $\mathcal{S}$.

\smallskip

{\bf STEP 2}. Triangularize the set $\mathcal{P}$ of
polynomials to obtain one or several (regular) triangular sets
$\mathcal{T}_{k}$ by using the method of triangular decomposition or
Gr\"obner bases.

\smallskip

{\bf STEP 3}. For each triangular set $\mathcal{T}_{k}$, use the
polynomial set $\mathcal{F}$ to compute an algebraic variety $V$ in
$\boldsymbol{\mu}$ by means of real solution classification (using, e.g.,
Yang--Xia's method \cite{YHX01,LYBX05} or Lazard--Rouillier's method
\cite{DLFR}), which decomposes the parameter space $\mathbb{R}^{p}$ into finitely many cells such that in each cell the number of real zeros of $\mathcal{T}_{k}$ and the signs of polynomials in $\mathcal{F}$ at these real zeros remain invariant. The algebraic variety is defined by polynomials in $\boldsymbol{\mu}$. Then take a rational sample point from each cell by using the method of PCAD or critical points \cite{msed2007}, and isolate the real zeros of $\mathcal{T}_{k}$ by rational intervals at this sample point. In this way, the number of real zeros of $\mathcal{T}_{k}$ and the signs of polynomials in $\mathcal{F}$ at these real zeros in each cell are determined.

\smallskip

{\bf STEP 4}. Determine the signs of (the factors of) the defining polynomials of $V$ at each sample point. Formulate the conditions on $\boldsymbol{\mu}$ according to the signs of these defining polynomials at the sample points in those cells in which the system $S$ has exactly the number of real solutions we want.

\smallskip

{\bf STEP 5}. Output the conditions on the parameter $\boldsymbol{\mu}$ such that the dynamical system has a prescribed number of Jacobi stable fixed points.

\begin{remark}\label{kcc-rem-step}
There are several software packages which can be used to realize the algorithmic steps in our approach. For example, the method of discriminant varieties of Lazard and Rouillier \cite{DLFR} (implemented as a Maple package {\sf DV} by Moroz and Rouillier), and the Maple package {\sf DISCOVERER} (see also recent improvements in the Maple package RegularChains[SemiAlgebraicSetTools]), developed by Xia, implements the methods of Yang and Xia \cite{LYBX05} for real solution classification. In Section \ref{sect5}, we will present several examples to demonstrate the applicability and the computational efficiency of our general algorithmic approach.
\end{remark}

\section{Examples and Experiments}\label{sect5}
In this section, we explain how to apply the algorithmic tests to study Jacobi stability of dynamical systems and illustrate some of the computational steps by a famous Brusselator model. In addition, using our computational approach, we also analyze Jacobi stability of the Cdc2-Cyclin B/Wee1 system and the Lorenz--Stenflo system. The experimental results show the applicability and efficiency of our algorithmic approach. All the experiments were made in Maple 17 on a Windows 10 laptop with 4 CPUs 2.9GHz and 8192M RAM.

\subsection{Illustrative Example: The Brusselator}\label{sect5.1}
The Brusselator is a simple biological system proposed by Prigogine and
Lefever in 1968 \cite{IPRL1968}. The chemical reactions are:
\begin{equation}\label{kcc-ex-1}
\begin{split}
A&\mathop{\longrightarrow}\limits^{k_1} X,\\
B+X&\mathop{\longrightarrow}\limits^{k_2} Y+D,\\
2X+Y&\mathop{\longrightarrow}\limits^{k_3} 3X,\\
X&\mathop{\longrightarrow}\limits^{k_4} E,
\end{split}
\end{equation}
where the $k_i$'s are rate constants, and the reactant concentrations of $A$ and $B$ are kept constant. Then, the law of mass action leads to the following dynamical system:
\begin{equation}\label{kcc-ex-2}
\begin{split}
\frac{dx}{dt}&=1-(b+1)x+ax^2y,\\
\frac{dy}{dt}&=bx-ax^2y,
\end{split}
\end{equation}
where $x$ and $y$ correspond to the concentrations of $X$ and $Y$, respectively, and $a$, and $b$ are positive constants.

Here we use algorithmic tests to study the Jacobi stability of the Brusselator model. The experimental results show as well the applicability and efficiency of our algorithmic approach. The fixed point of \eqref{kcc-ex-2} satisfies the following algebraic system:
\begin{equation}\label{kcc-ex-3}
\begin{split}
\{f_1&=1-(b+1)x+ax^2y=0,\\
 f_2&=bx-ax^2y=0,\,\, a>0,\,\,b>0\}.
\end{split}
\end{equation}
Simple computations show that the Jacobian evaluated at the fixed point $\bar{\boldsymbol{x}}=(x,y)$ is given by
\begin{equation}\label{kcc-ex-4}
\begin{split}
J(\bar{\boldsymbol{x}})=\left(
 \begin{matrix}
    2\,axy-b-1 & ax^2 \\
   -2\,axy+b & -ax^2 \\
  \end{matrix}
  \right),
\end{split}
\end{equation}
and hence the characteristic polynomials of $J(\bar{\boldsymbol{x}})$ and $J^2(\bar{\boldsymbol{x}})$ can be written as
\begin{equation}\label{kcc-ex-5}
\begin{split}
p(\lambda)&=\lambda^{2}+a_1(\bar{\boldsymbol{x}})\lambda+a_2(\bar{\boldsymbol{x}}),\\
\bar{p}(\lambda)&=\lambda^{2}+\bar{a}_1(\bar{\boldsymbol{x}})\lambda+\bar{a}_2(\bar{\boldsymbol{x}}),
\end{split}
\end{equation}
where
\begin{equation}\label{kcc-ex-5-5}
\begin{split}
a_1(\bar{\boldsymbol{x}})&=ax^{2}-2\,axy+b+1,\\
\bar{a}_1(\bar{\boldsymbol{x}})&=-a^{2}x^{4}+4a^{2}x^{3}y-4a^{2}x^{2}y^{2}-2abx^{2}\\
&\quad+4abxy+4axy-b^{2}-2b-1,\\
a_2(\bar{\boldsymbol{x}})&=ax^2,\quad\bar{a}_2(\bar{\boldsymbol{x}})=a^{2}x^{4}.\nonumber
\end{split}
\end{equation}

\noindent\underline{{\bf Jacobi Test I:}} Using Proposition \ref{kcc-prop-1}, we obtain the following QE problem
\begin{equation}\label{kcc-QEP-ex}
\begin{split}
\exists b_1&\exists c_1\exists \bar{\boldsymbol{x}}\\
&\Big[f_1=1-(b+1)x+ax^2y=0,\\
&\,\,f_2=bx-ax^2y=0,\\
&\,\,a_{1}(\bar{\boldsymbol{x}})-b_1=ax^{2}-2\,axy+b+1-b_1=0,\\
&\,\,a_{2}(\bar{\boldsymbol{x}})-c_1=ax^{2}-c_1=0,\\
&\,\,a>0,\,\,b>0,\,\,2c_1-b_1^2>0\Big].
\end{split}
\end{equation}
Using the Maple package RegularChains[SemiAlgebraicSetTools] to the above system, we obtain an equivalent quantifier-free formula $R_0=(a-b)^2-2\,b+1<0$. Thus we have the following result on the Jacobi stability of the Brusselator model.
\begin{theorem}\label{main-kcc-bm}
The Brusselator model \eqref{kcc-ex-2} has at least one Jacobi stable fixed point if $R_0=(a-b)^2-2\,b+1<0$.
\end{theorem}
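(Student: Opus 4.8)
The plan is to verify the quantifier-free formula $R_0<0$ directly by hand, exploiting that the Brusselator is two-dimensional so that the existential block in the QE problem \eqref{kcc-QEP-ex} collapses to a single inequality; as an alternative, one could simply quote the \textsf{RegularChains} output reported just above as the proof.

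First I would solve the fixed-point system \eqref{kcc-ex-3}. Adding $f_1+f_2$ gives $1-x=0$, hence $x=1$, and substituting back into $f_2=0$ yields $y=b/a$, which is well defined since $a>0$. Thus the Brusselator has the single fixed point $\bar{\boldsymbol{x}}=(1,b/a)$; at this point the denominator $ax^2$ is nonzero, so $\bar{\boldsymbol{x}}$ is an isolated fixed point and Lemma \ref{lem-main-1}, Theorem \ref{thm-kccmain-1} and Proposition \ref{kcc-prop-1} all apply.

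Next I would specialize the coefficients in \eqref{kcc-ex-5-5} at $(1,b/a)$, obtaining $a_1(\bar{\boldsymbol{x}})=a-b+1$ and $a_2(\bar{\boldsymbol{x}})=a$, so that $p(\lambda)=\lambda^2+(a-b+1)\lambda+a$. Then I would apply Proposition \ref{kcc-prop-1} with $n=2$, $m=1$: the existential quantifiers over $b_1$ and $c_1$ are eliminated at once, since the coefficient-matching equations force $b_1=a_1(\bar{\boldsymbol{x}})$ and $c_1=a_2(\bar{\boldsymbol{x}})$, so the whole formula reduces to the single inequality $2a_2(\bar{\boldsymbol{x}})-a_1(\bar{\boldsymbol{x}})^2>0$, that is, $2a-(a-b+1)^2>0$. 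Expanding gives $2a-(a-b+1)^2=-\bigl((a-b)^2-2b+1\bigr)=-R_0$, so the condition is precisely $R_0<0$. I would also note that $R_0<0$ already forces the complex-conjugacy condition (a) of Theorem \ref{thm-kccmain-1}, because then $a_1(\bar{\boldsymbol{x}})^2<2a_2(\bar{\boldsymbol{x}})<4a_2(\bar{\boldsymbol{x}})$ since $a_2(\bar{\boldsymbol{x}})=a>0$; hence no further constraint survives. Therefore, when $R_0<0$, the unique fixed point $\bar{\boldsymbol{x}}$ is Jacobi stable, and in particular the system has at least one Jacobi stable fixed point, as claimed.

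The computation is short and I do not expect a genuine obstacle; the only points requiring a little care are confirming that condition (b) of Theorem \ref{thm-kccmain-1} subsumes condition (a) so that the quantifier-free formula is clean, and checking that $\bar{\boldsymbol{x}}$ is indeed isolated with nonvanishing denominator so that the hypotheses of the earlier results are met. If one instead wishes a fully mechanical derivation, it suffices to record the quantifier-elimination computation reported above, whose output is exactly $R_0<0$.
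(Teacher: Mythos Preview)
Your proposal is correct and follows essentially the same logical route as the paper: formulate the QE problem \eqref{kcc-QEP-ex} via Proposition~\ref{kcc-prop-1} and eliminate the quantifiers to obtain $R_0<0$. The only difference is in execution---the paper simply invokes the \textsf{RegularChains} package to perform the elimination, whereas you solve the fixed-point equations and carry out the coefficient matching and the single inequality check by hand. Your version has the advantage of being self-contained and machine-independent, and your observation that the constraint \eqref{kcc-p-eq3} is automatically implied (so only $2c_1-b_1^2>0$ survives) mirrors exactly what Proposition~\ref{kcc-prop-1} already establishes in general.
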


\noindent\underline{{\bf Jacobi Test II:}} Using Theorem \ref{semi-kcc}, we obtain the following semi-algebraic system, which can be used for analyzing whether the Brusselator model \eqref{kcc-ex-2} has a prescribed number of Jacobi stable fixed points:
\begin{equation}\label{kcc-ex-6}
\mathcal{S}:\left\{\begin{array}{l}
f_1=1-(b+1)x+ax^2y=0,\\
f_2=bx-ax^2y=0,\\
\bar{a}_1(\bar{\boldsymbol{x}})>0,\quad
\bar{a}_2(\bar{\boldsymbol{x}})>0,\quad a>0,\quad b>0.
\end{array}\right.
\end{equation}
Solving the semi-algebraic system $\mathcal{S}$ by means of symbolic computation presented in Section \ref{sec-kcc-test2}, we see that system $\mathcal{S}$ has exactly one real solution if and only if $R_0
=(a-b)^2-2\,b+1<0$; system $\mathcal{S}$ has no real solution if and only if $R_0>0$.

From the above analysis, we have the following result on the Jacobi stability of the Brusselator model.
\begin{theorem}\label{main-kcc1}
The Brusselator model \eqref{kcc-ex-2} has exactly one or no Jacobi stable fixed point if and only if $R_0=(a-b)^2-2\,b+1<0$ or $R_0>0$.
\end{theorem}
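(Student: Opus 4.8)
The plan is to apply Theorem~\ref{semi-kcc}, by which the number of Jacobi stable fixed points of \eqref{kcc-ex-2} coincides with the number of distinct real solutions $(x,y)$ (with $a,b>0$) of the semi-algebraic system $\mathcal{S}$ in \eqref{kcc-ex-6}. Since $n=2$, the Routh--Hurwitz list in \eqref{kcc-main-13} collapses: for $\bar p(\lambda)=\lambda^{2}+\bar a_1\lambda+\bar a_2$ one has $\Delta_1=\bar a_1$, $\Delta_2=\bar a_1\bar a_2$ and $\bar a_n=\bar a_2$, so those strict inequalities are equivalent to $\bar a_1(\bar{\boldsymbol{x}})>0\wedge\bar a_2(\bar{\boldsymbol{x}})>0$ --- exactly the inequalities appearing in $\mathcal{S}$. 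Hence the theorem reduces to counting the real points $(x,y)$ satisfying $f_1=f_2=0$, $a,b>0$, $\bar a_1>0$ and $\bar a_2>0$.

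First I would solve the equality part. Adding $f_1$ and $f_2$ yields $1-x=0$, so $x=1$; substituting into $f_2=0$ gives $y=b/a$. Thus for every admissible $(a,b)$ the Brusselator has the \emph{unique} fixed point $\bar{\boldsymbol{x}}=(1,b/a)$, and $\mathcal{S}$ therefore has at most one real solution. Next, from \eqref{kcc-ex-5-5} we have $\bar a_2(\bar{\boldsymbol{x}})=a^{2}x^{4}$, which evaluates to $a^{2}>0$ at $x=1$; so the condition $\bar a_2>0$ holds automatically, and the unique fixed point is Jacobi stable if and only if $\bar a_1(1,b/a)>0$.

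The remaining step is the substitution $x=1$, $y=b/a$ into the formula for $\bar a_1(\bar{\boldsymbol{x}})$ in \eqref{kcc-ex-5-5}. After cancellation the nine terms reduce to
\[
\bar a_1(1,b/a)=-a^{2}+2ab-b^{2}+2b-1=-\bigl[(a-b)^{2}-2b+1\bigr]=-R_0 .
\]
Consequently $\bar a_1(1,b/a)>0\iff R_0<0$ and $\bar a_1(1,b/a)<0\iff R_0>0$. Therefore, if $R_0<0$ then $\mathcal{S}$ has exactly one real solution and the Brusselator has exactly one Jacobi stable fixed point; if $R_0>0$ then $\mathcal{S}$ has no real solution and the Brusselator has no Jacobi stable fixed point. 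This is the asserted equivalence. For completeness one should record the excluded boundary $R_0=0$: there $\bar p(\lambda)=\lambda^{2}+a^{2}$ has the purely imaginary roots $\pm i a$, whose real parts are not strictly negative, so by Lemma~\ref{lem-main-1} the fixed point is not Jacobi stable --- consistent with the strict inequalities above.

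The argument is essentially routine; the only mild obstacle is the bookkeeping in the last display, where the $ab$-, $b^{2}$- and $b$-terms coming from $4a^{2}x^{3}y$, $-4a^{2}x^{2}y^{2}$, $-2abx^{2}$, $4abxy$ and $4axy$ must cancel down to a perfect square in $a-b$. Once that identity is verified, there is nothing further to do.
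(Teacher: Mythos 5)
Your argument is correct, and it reaches the paper's conclusion by a more elementary route. The paper's own proof of this theorem is a one-line appeal to the machinery of Section~\ref{sec-kcc-test2}: the semi-algebraic system $\mathcal{S}$ in \eqref{kcc-ex-6} is fed to the real-solution-classification pipeline (triangular decomposition, discriminant variety / border polynomial, sample points), and the software returns $R_0$ as the defining polynomial separating the two cells. You instead observe that the equality part of $\mathcal{S}$ can be solved in closed form --- $f_1+f_2=1-x$ forces $x=1$, hence $y=b/a$ --- so the counting problem degenerates to a sign evaluation of $\bar a_1$ and $\bar a_2$ at a single explicit point; the identity $\bar a_1(1,b/a)=-R_0$ (which I have checked: $-a^2+4ab-4b^2-2ab+4b^2+4b-b^2-2b-1=-\bigl[(a-b)^2-2b+1\bigr]$) then yields the theorem directly. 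Your reduction of the Routh--Hurwitz list for $n=2$ to $\bar a_1>0\wedge\bar a_2>0$ matches the inequalities the paper places in $\mathcal{S}$. What your version buys is a human-verifiable proof independent of any software certificate; what it loses is generality --- it works only because the Brusselator happens to have a unique, rationally parametrized fixed point, whereas the paper's method is the one that scales to the later examples where no such closed form exists. Your remark on the boundary $R_0=0$ is also apt: there the unique fixed point is not Jacobi stable (purely imaginary eigenvalues of $J^2$), so the stated biconditional should be read, as is conventional for outputs of real solution classification, as holding on the open cells cut out by $R_0\neq 0$.
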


Theorem \ref{main-kcc1} is consistent with the result in \cite[Section 3.3]{BHS2012}. In Figure \ref{fig:kcc-equilibria}, we provide partitions of the parameter set $\{(a,b)\,|\,a,b>0\}$ of \eqref{kcc-ex-2} for distinct numbers of Jacobi stable fixed points.

\begin{figure}[htbp]
    \centering
    \includegraphics[width=0.4\textwidth]{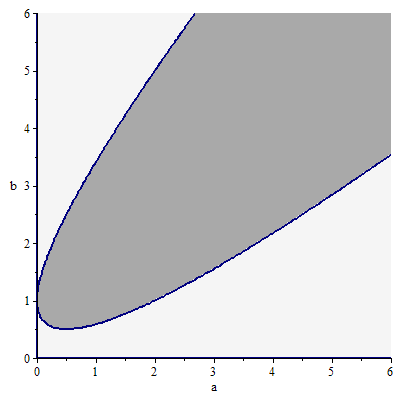}
    \caption{In the dark-gray and light-gray regions, system \eqref{kcc-ex-2} has one and zero Jacobi stable fixed point, respectively. The blue curve is defined by $R_0=0$.}\label{fig:kcc-equilibria}
\end{figure}

\subsection{Other Models and Remarks}\label{sect5.2}

\subsubsection{The Cdc2-Cyclin B/Wee1 System}\label{sect5.2-1}

The aim of this subsection is to study the Jabobi stability of the Cdc2-cyclin B/Wee1 system. We refer to \cite{AFS2004} for the setting details of this example. Under certain assumptions, the system of differential equations that model the two-component, mutually inhibitory feed-back loop is reduced to the following form
\begin{equation}\label{kcc-other-1}
\begin{split}
\frac{dx}{dt}&=\alpha_1(1-x)-\frac{\beta_1x(\upsilon y)^{\gamma_1}}{k_1+(\upsilon y)^{\gamma_1}},\\
\frac{dy}{dt}&=\alpha_2(1-y)-\frac{\beta_2yx^{\gamma_2}}
{k_2+x^{\gamma_2}},\\
\end{split}
\end{equation}
where $\alpha_1$, $\alpha_2$, $\beta_1$,$\beta_2$ are rate constants, $k_1$, $k_2$ are Michaelis constants, $\gamma_1$, $\gamma_2$ are Hill coefficients, and $\upsilon$ is a coefficient that reflects the strength of the influence of Wee1 on Cdc2-cyclin B. For easy reference and comparison, we take some of the parameter values for the biological constants suggested in \cite{AFS2004}:
\begin{equation}
\begin{split}
\alpha_1=\alpha_2=1,\,\, \beta_1=200,\,\, \beta_2=10,\,\,
k_1=30,\,\, \upsilon=24.\nonumber
\end{split}
\end{equation}
Then system \eqref{kcc-other-1} becomes
\begin{equation}\label{kcc-other-2}
\begin{split}
\frac{dx}{dt}&=1-x-200\,\frac{x\left(24\,y\right)^{\gamma_{{1}}}}{30+ \left(24\,y\right)^{\gamma_{{1}}}},\\
\frac{dy}{dt}&=1-y-10\,\frac{yx^{\gamma_{{2}}}}{k_{{2}}+x^{\gamma_{{2}}}}.
\end{split}
\end{equation}

Here we summarize our computational results for $\gamma_1,\gamma_2\in\{1,2\}$, report some of our experiments with comparisons for the proposed two algorithmic tests of Jacobi stability analysis, and provide timing statistics in table form to show the performance of the two schemes.

\begin{theorem}\label{main-3-kcc}
The following statements hold for the Cdc2-cyclin B/Wee1
system \eqref{kcc-other-2}.

{\bf Case 1:} $\gamma_1=\gamma_2=1$. System \eqref{kcc-other-2} has no Jacobi stable fixed point.

{\bf Case 2:} $\gamma_1=1$, $\gamma_2=2$. System \eqref{kcc-other-2} has at least one Jacobi stable fixed point if $R_1<0$ by {\bf Jacobi Test I}; and has exactly one Jacobi stable fixed point if and only if $R_1<0$ by {\bf Jacobi Test II}.

{\bf Case 3:} $\gamma_1=2$, $\gamma_2=1$. Application of {\bf Jacobi Test I} shows that system \eqref{kcc-other-2} has at least one Jacobi stable fixed point if one of the following two conditions
\begin{equation}\label{kcc-other-3}
\begin{split}
\mathcal{C}_1&=[R_2<0,\, R_3<0,\, R_4\leq0,\, R_5\leq0],\\
\mathcal{C}_2&=[0<R_2,\, R_3<0,\, R_4\leq0,\, R_5\leq0,\,\,R_6\leq0]\nonumber
\end{split}
\end{equation}
holds. Application of {\bf Jacobi Test II} shows that system \eqref{kcc-other-2} has exactly one or two Jacobi stable fixed points if and only if $R_7<0$ or the condition
\begin{equation}\label{kcc-other-4}
\begin{split}
\mathcal{C}_3=[R_8<0,\, 0<R_9,\, 0\leq R_{10}]\nonumber
\end{split}
\end{equation}
holds.

{\bf Case 4:} $\gamma_1=\gamma_2=2$. The {\bf Jacobi Test I} cannot detect the existence of Jacobi stable fixed point in a reasonably short time! Application of {\bf Jacobi Test II} shows that system \eqref{kcc-other-2} has exactly one or two Jacobi stable fixed points if and only if $R_{11}<0$ or the condition
\begin{equation}\label{kcc-other-5}
\begin{split}
\mathcal{C}_4=[0\leq R_{12},\, 0<R_{13},\, R_{14}\leq 0,\, R_{15}\leq 0,\,R_{16}\leq 0]\nonumber
\end{split}
\end{equation}
holds.
\end{theorem}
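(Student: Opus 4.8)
The plan is to treat each of the four cases as a separate run of the algorithmic pipeline of Sections~\ref{sect-main} and~\ref{sect4}, applied to the $2$-dimensional system obtained from~\eqref{kcc-other-2} by substituting the indicated integer values of $\gamma_1,\gamma_2$. Since all biological constants except $k_2$ have been fixed, the only surviving parameter is $\boldsymbol{\mu}=(k_2)$, so each application of \textbf{Jacobi Test II} will produce a partition of the half-line $\{k_2>0\}$ into cells carrying a constant number of Jacobi stable fixed points, and each application of \textbf{Jacobi Test I} will produce a quantifier-free sufficient condition in $k_2$; the polynomials $R_i$ and the conditions $\mathcal{C}_i$ in the statement are exactly the defining data of these cells.

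For a fixed pair $(\gamma_1,\gamma_2)$ I would first clear denominators in~\eqref{kcc-other-2} to obtain polynomial fixed-point equations $f_{1,1}(x,y,k_2)=f_{2,1}(x,y,k_2)=0$ together with the inequations $f_{1,2}=30+(24y)^{\gamma_1}\neq 0$ and $f_{2,2}=k_2+x^{\gamma_2}\neq 0$, then form the Jacobian $J(x,y,k_2)$, its characteristic polynomial $p(\lambda)=\lambda^2+a_1\lambda+a_2$, and the characteristic polynomial $\bar p(\lambda)=\lambda^2+\bar a_1\lambda+\bar a_2$ of $J^2$. Because $n=2$, the Hurwitz conditions collapse: $\bar a_2=a_2^2$ is automatically positive once $a_2=\det J\neq 0$, and $\bar a_1=2a_2-a_1^2$, so by Lemma~\ref{lem-main-1} a fixed point is Jacobi stable precisely when $a_1^2<2a_2$ holds there (equivalently, by Theorem~\ref{thm-kccmain-1}, the eigenvalues of $J$ form a complex-conjugate pair $\alpha\pm i\beta$ with $\alpha^2<\beta^2$). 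For \textbf{Jacobi Test I} I would instantiate the QE formula~\eqref{kcc-QEP} with $m=1$, namely $\exists b_1\exists c_1\exists x\exists y$ of the conjunction of the two fixed-point equations, the two inequations, $a_1(x,y,k_2)-b_1=0$, $a_2(x,y,k_2)-c_1=0$, and $2c_1-b_1^2>0$, and hand it to a real quantifier-elimination engine (\textsf{QEPCAD B}, Mathematica's \textsf{Reduce}, or \textsf{RegularChains}); the output is the sufficient condition recorded for that case ($R_1<0$ in Case~2, the disjunction $\mathcal{C}_1\vee\mathcal{C}_2$ in Case~3). For \textbf{Jacobi Test II} I would assemble the semi-algebraic system~\eqref{kcc-main-13} — the two fixed-point equations, the two inequations, and $\bar a_2>0$, $\bar a_1>0$ — and run the real-solution-classification procedure of Section~\ref{sec-kcc-test2}: triangularize the equation polynomials, use the inequality polynomials to compute a border polynomial / discriminant variety in $k_2$, choose a rational sample point in each resulting cell of $\{k_2>0\}$, and isolate the real zeros of the triangular set at the samples to read off the exact count of Jacobi stable fixed points in each cell; collecting the cells with count $1$, count $2$, and count $0$ yields the stated equivalences (count $1$ iff $R_1<0$ in Case~2; count $1$ or $2$ iff $R_7<0\vee\mathcal{C}_3$ in Case~3; count $1$ or $2$ iff $R_{11}<0\vee\mathcal{C}_4$ in Case~4).

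Case~1 warrants a separate remark: here one must show the semi-algebraic system~\eqref{kcc-main-13} has no real solution for any $k_2>0$, equivalently that along every real branch of the fixed-point curve the quantity $2a_2-a_1^2$ is nonpositive. For $\gamma_1=\gamma_2=1$ the fixed-point equations are low degree and $a_1,a_2$ are rational of small degree, so this is the cheapest instance of the classification and can in fact be certified by a direct sign analysis of $2a_2-a_1^2$ restricted to the curve $f_{1,1}=f_{2,1}=0$.

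The main obstacle is purely computational. The degree of $f_{i,1}$ and of $a_1,a_2$ grows rapidly with $\gamma_1,\gamma_2$: after clearing the two denominators $30+(24y)^{\gamma_1}$ and $k_2+x^{\gamma_2}$ the numerators interact multiplicatively, so by Case~4 the cylindrical algebraic decomposition underlying the QE problem~\eqref{kcc-QEP} blows up and \textbf{Jacobi Test I} fails to terminate in reasonable time — which is precisely why the theorem only records a Test~II result there. Even for Test~II, the triangularization and the subsequent real root classification in Cases~3 and~4 generate large border polynomials, and the delicate verification point is ensuring that the sample points are genuinely representative of every cell cut out in $\{k_2>0\}$, so that the reported counts — and hence the completeness of the list of $R_i$ and $\mathcal{C}_i$ — are correct; this is guaranteed by the discriminant-variety and PCAD machinery cited in Section~\ref{sect4}, and it is where the bulk of the (machine-assisted) effort is spent.
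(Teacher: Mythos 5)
Your proposal is correct and follows essentially the same route as the paper, which itself only sketches this proof by deferring to the Brusselator computation of Theorem \ref{main-kcc1}: instantiate the QE formula \eqref{kcc-QEP} with $m=1$ for Test I and the semi-algebraic system \eqref{kcc-main-13} for Test II in each of the four cases, and run the real-solution-classification machinery of Section \ref{sec-kcc-test2} over the single parameter $k_2$. Your explicit reduction $\bar a_1=2a_2-a_1^2$, $\bar a_2=a_2^2$ for $n=2$ is exactly what the paper uses implicitly (it matches the displayed $\bar a_1,\bar a_2$ for the Brusselator), so nothing further is needed.
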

Theorem \ref{main-3-kcc} can be proved by using calculations and arguments similar to those for the proof of Theorem \ref{main-kcc1}. The detailed proof is omitted here. The explicit expressions of $R_i$ are put in Appendix \ref{sect-ex}. The results in Table \ref{Tab-A} show the difference of computational times using the two algorithmic tests. The times marked in blue and red denote the entire computations of {\bf Jacobi Test II} for system \eqref{kcc-other-2} to have exactly one and two Jacobi stable fixed points, respectively.

\begin{table}[h]
\caption{Computational times (in seconds) of the {\bf Jacobi Test I/II} for system \eqref{kcc-other-2}.}\label{Tab-A}
\begin{center}
\begin{tabular}{ccc}
\toprule

  % after \\: \hline or \cline{col1-col2} \cline{col3-col4} ...
  Cases & {\bf Jacobi Test I} & {\bf Jacobi Test II}\\ \hline
   1 & 0.009 & 0.020\\ \hline
   2 & 0.329 & 0.141 \\ \hline
   3 & 37.203 & \textcolor{blue}{0.156}/\textcolor{red}{1.547} \\ \hline
   4 & -- & \textcolor{blue}{0.203}/\textcolor{red}{51.156}\\
   \bottomrule
\end{tabular}
\end{center}
\end{table}

\begin{remark}\label{rem-kcc-cdc2}
From Table \ref{Tab-A}, one can see that the {\bf Jacobi Test I} can be applied to stability analysis only for dynamical systems involving a few (say, less than 4) parameters and variables. The {\bf Jacobi Test II} is expected to have a better performance for systems of higher dimension.
\end{remark}

\subsubsection{The Lorenz--Stenflo System}\label{sect5.2-ls}

In order to save space, the details of our results on the Jacobi stability of the Lorenz--Stenflo system are placed in Appendix \ref{sect-kcc-B}.

\section{Conclusion and Future Work}\label{sect6}
In this paper and for the first time, the problem of the Jacobi stability analysis is tackled algorithmically for an important class of systems of ODEs of arbitrary dimension. We reduce the problems of detecting the existence and analyzing the numbers of Jacobi stable fixed points of dynamical systems to algebraic problems, and propose two algorithmic schemes to verify Jacobi stability of dynamical systems by using methods of symbolic computation. The results of experiments performed demonstrate the effectiveness of the proposed algorithmic tests.

Our work also indicates that algebraic methods for Jacobi stability analysis are feasible only for systems of small or moderate size. For large systems, one may try model reduction methods (see, e.g., \cite{BLLM2011}) to reduce them to small ones. It would be interesting to employ our algorithmic schemes for analyzing the Jacobi stability of dynamical systems from different domains of science and engineering. The involved symbolic computation may easily become too heavy and intractable as the number and degree of equations and the number of system parameters increase. How to simplify and optimize the steps of symbolic computation in current algorithmic analysis of the Jacobi stability is a nontrivial question that remains for further investigation.

As noted in Remark \ref{rem-kcc-3-3}, one may reduce a dynamical system of the form \eqref{eq1-0} to a set of second-order differential equations as \eqref{kcc-4} and study the Jacobi stability of the reduced system without information about the matrix $J(\bar{\boldsymbol{x}},\boldsymbol{\mu})$ (see \cite{HHLY2015,GY2016,MLTH2016,GY2017,HLL2019} and references therein). How to combine differential elimination theory and other techniques to automate the model reduction process for Jacobi stability analysis and how to develop practical software tools to detect the Jacobi stability of dynamical systems are some of the standing questions for our research.

\section*{Acknowledgments}
Huang's work is supported by the National Natural Science Foundation of China (NSFC 12101032 and NSFC 12131004), Yang's work is partially supported by NSFC 12326353.

\bibliographystyle{plain}
\bibliography{ref}

\appendix
\newpage

\section{Proof of Lemma \ref{lem-main-1}}\label{sec-A}

\begin{proof}
The proof can be done by using the ideas from \cite[Section 4]{HPPS2016}. It should be noted that our fixed point here has arbitrariness and is not limited to the origin.

By taking the derivative of system \eqref{eq1-0} with respect to the time parameter $t$, we obtain
\begin{equation}\label{kcc-main-1}
\begin{split}
\frac{d^2x_1}{dt^2}&=f_{11}(\boldsymbol{x},\boldsymbol{\mu})y_1+f_{12}(\boldsymbol{x},\boldsymbol{\mu})y_2+
\cdots+f_{1n}(\boldsymbol{x},\boldsymbol{\mu})y_n,\\
\frac{d^2x_2}{dt^2}&=f_{21}(\boldsymbol{x},\boldsymbol{\mu})y_1+f_{22}(\boldsymbol{x},\boldsymbol{\mu})y_2+
\cdots+f_{2n}(\boldsymbol{x},\boldsymbol{\mu})y_n,\\
&\vdots\\
\frac{d^2x_n}{dt^2}&=f_{n1}(\boldsymbol{x},\boldsymbol{\mu})y_1+f_{n2}(\boldsymbol{x},\boldsymbol{\mu})y_2+
\cdots+f_{nn}(\boldsymbol{x},\boldsymbol{\mu})y_n,\nonumber
\end{split}
\end{equation}
where
\begin{equation}\label{kcc-main-2}
\begin{split}
f_{ij}=\frac{\partial f_i(\boldsymbol{x},\boldsymbol{\mu})}{\partial x_j},\quad y_i=\frac{dx_i}{dt}, \quad i,j=1,\ldots,n.
\end{split}
\end{equation}
The system can be written as
\begin{equation}\label{kcc-main-3}
\begin{split}
\frac{d^2x_i}{dt^2}-\sum_{k=1}^nf_{ik}(\boldsymbol{x},\boldsymbol{\mu})y_k=0, \quad i=1,\ldots,n.
\end{split}
\end{equation}
By comparing equations \eqref{kcc-main-3} and \eqref{kcc-4} we find
\begin{equation}\label{kcc-main-4}
\begin{split}
G^i=-\frac{1}{2}\sum_{k=1}^nf_{ik}(\boldsymbol{x},\boldsymbol{\mu})y_k=-\frac{1}{2}J(\boldsymbol{x},\boldsymbol{\mu})\boldsymbol{y}^T, \quad i=1,\ldots,n,
\end{split}
\end{equation}
where $J(\boldsymbol{x},\boldsymbol{\mu})$ is the Jacobian matrix of system \eqref{eq1-0} evaluated at the point $\boldsymbol{x}=(x_1,x_2,\ldots,x_n)$.

A direct computation shows that the coefficients of the nonlinear connection $N_j^i$ in \eqref{kcc-5} are the following:
\begin{equation}\label{kcc-main-5}
\begin{split}
N_j^i=\frac{\partial G^i}{\partial y_j}=-\frac{1}{2}f_{ij}(\boldsymbol{x},\boldsymbol{\mu}), \quad i,j=1,\ldots,n,
\end{split}
\end{equation}
and hence
\begin{equation}\label{kcc-main-6}
\begin{split}
G^i_{j\ell}=\frac{\partial N_j^i}{\partial y_{\ell}}=0.\nonumber
\end{split}
\end{equation}
Then, for the components of the deviation curvature tensor $(P_j^i)$, given by \eqref{kcc-10}, we have
\begin{equation}\label{kcc-main-7}
\begin{split}
P_j^i=-2\frac{\partial G^i}{\partial x_j}+y_{\ell}\frac{\partial N_j^i}{\partial x_{\ell}}+N_{\ell}^iN_j^{\ell}.
\end{split}
\end{equation}
Then, with the use of \eqref{kcc-main-4} and \eqref{kcc-main-5} we obtain
\begin{equation}\label{kcc-main-8}
\begin{split}
P_j^i=\frac{1}{2}\sum_{\ell=1}^nf_{ij\ell}(\boldsymbol{x},\boldsymbol{\mu})y_{\ell}+
\frac{1}{4}\sum_{\ell=1}^nf_{i\ell}(\boldsymbol{x},\boldsymbol{\mu})f_{\ell j}(\boldsymbol{x},\boldsymbol{\mu}),
\end{split}
\end{equation}
where
\begin{equation}\label{kcc-main-9}
\begin{split}
f_{ij\ell}=\frac{\partial f_{ij}(\boldsymbol{x},\boldsymbol{\mu})}{\partial x_{\ell}}, \quad \ell=1,\ldots,n.\nonumber
\end{split}
\end{equation}

Evaluating $P_j^i$ at the fixed point $\bar{\boldsymbol{x}}=(\bar{x}_1,\bar{x}_2,\ldots,\bar{x}_n)$ we obtain
\begin{equation}\label{kcc-main-10}
\begin{split}
P=(P_j^i)\big|_{\boldsymbol{x}=\bar{\boldsymbol{x}}}=\frac{1}{4}\sum_{\ell=1}^nf_{i\ell}(\boldsymbol{x},\boldsymbol{\mu})f_{\ell j}(\boldsymbol{x},\boldsymbol{\mu})\big|_{\boldsymbol{x}=\bar{\boldsymbol{x}}}=\frac{1}{4}J^2(\bar{\boldsymbol{x}},\boldsymbol{\mu}).
\end{split}
\end{equation}
According to Definition \ref{kcc-def}, we complete the proof of Lemma \ref{lem-main-1}.
\end{proof}

\section{List of Expressions}\label{sect-ex}

The expressions of $R_i$ in Theorem \ref{main-3-kcc} are listed below:
\begin{equation}
\begin{split}
R_1&=70125304508199817427241\,k_{{2}}^{4}+2829746914856987864074\,k_{{2}}^{3}
+3361232488971186326\,k_{{2}}^{2}\\
&-32526719864490606\,k_{{2}}+98410095984901,\\
R_2&=34129852019334120275001\,k_{{2}}^{6}+18955466201762993305416\,k_{{2}}^{5}
+2079190417275568260115\,k_{{2}}^{4}\\&+11545127440608096320\,k_{{2}}^{3}
-3094243064633242585\,k_{{2}}^{2}\\&-15298201362610744\,k_{{2}}
+1392902183498461,\\
R_3&=2547433055974362750136440827901\,k_{{2}}^{10}+2549462537607706690564021423110\,k_{{2}}^{9}
\\&-11268940034759448427530866055\,k_{{2}}^{8}
+1771067395411006049507787720\,k_{{2}}^{7}\\&+28244030137459182635195610\,k_{{2}}^{6}
-460293255845386666548348\,k_{{2}}^{5}\\&+9100321029103822970810\,k_{{2}}^{4}
+297199449630260016520\,k_{{2}}^{3}+3645502337110575145\,k_{{2}}^{2}\\
&+33660749290927910\,k_{{2}}+169273934903501,\\
R_4&=530232127206073089345217425321878979119696099835597928551671071\,k_{{2}}^{22}\\
&+3097568644611971444018698881173599447265615683575058548497331842\,k_{{2}}^{21}\\
&+7631912584419224736955284085076914218830918666122263171845545206\,k_{{2}}^{20}\\
&+10243812154818022698633706855579005484350933865146851778358261390\,k_{{2}}^{19}\\
&+8031685712664902128506558699613174184490713998578705437363239640\,k_{{2}}^{18}\\
&+3608244453515524749277956778611111417229843502952863578657903764\,k_{{2}}^{17}\\
&+794764359535314512081976864785139140055413494553824640783948368\,k_{{2}}^{16}\\
&+20894102919957803979386118708497568327087562923158615870530244\,k_{{2}}^{15}\\
&-20077675696091265474887792731401874508822528862130335446054430\,k_{{2}}^{14}\\
&-2067712427622846475879664020713131922381294277091986666630480\,k_{{2}}^{13}\\
&-65182044845510285735335621012309574274177334661692302436484\,k_{{2}}^{12}\\
&+339635340361378887076606248053764520769182382620977531152\,k_{{2}}^{11}\\
&+61405280215488767442278117538451501738536091865851055396\,k_{{2}}^{10}\\
&+1006294203795016137037000243498516131331383390064323220\,k_{{2}}^{9}\\
&-5868935702170245109981453893271152154490719243964880\,k_{{2}}^{8}\\
&-249498164636329157855776482316565392957953809477476\,k_{{2}}^{7}\\
&-441175185347076290754389665091550246937940606057\,k_{{2}}^{6}\\
&+20236868784347726647226088401625126982040617134\,k_{{2}}^{5}\\
&+34188215606113191165438934804508226389357790\,k_{{2}}^{4}\\
&-827990633410769468615972272531952966688510\,k_{{2}}^{3}\\
&+765121082185867033046840058962291876076\,k_{{2}}^{2}
\\&+9150532230210533587885936821516051192\,k_{{2}}
\\&-19157158958607356689340105603878224,\\
R_5&=171111729271577451816916278294023637020390895326107388729132780846012918718939\,k_{{2}}^{27}\\
&+694058399451306261725309490847033737870051270843334752981725680436287342684503\,k_{{2}}^{26}\\
&+1178406123343676406775701066834141331856286513708627226487776933169942187601614\,k_{{2}}^{25}\\
&+1086184492602659587435877292825091909193535934090024046781151680622987684675950\,k_{{2}}^{24}\\
\nonumber
\end{split}
\end{equation}

\begin{equation}
\begin{split}
&+587600286465267683565422314814858949846965420943287191504340073072926730345075\,k_{{2}}^{23}\\
&+187696389921620346825367529671357600840225909847866759826485985846683732081095\,k_{{2}}^{22}\\
&+33823176214705151036374071791646083386183863063515016771850659515433885581640\,k_{{2}}^{21}\\
&+3168705954369816312044412433038541646244389993413871636586762301879685990920\,k_{{2}}^{20}\\
&+157719645153763614429585162767144417550947668046726238941183420945850858550\,k_{{2}}^{19}\\
&+3769263391289015682652935646237578780964143127713243504313067099864625550\,k_{{2}}^{18}\\
&+7781326543536335811696740024439763284918812524206846106330927985634740\,k_{{2}}^{17}\\
&-1598865380914933222287901070755314927309638217057705929494690264998220\,k_{{2}}^{16}\\
&-23611093059604333942073745159899456397702641168899437844647628990210\,k_{{2}}^{15}\\
&+288400340225816957967706406166422339125773748570149231817259494950\,k_{2}^{14}\\
&+8865162354955973036833080016831068975429805232497336344341656200\,k_{{2}}^{13}\\
&+8794551486034664260013873978797003612014468390045028510532040\,k_{{2}}^{12}\\
&-1185022490305343092089409689922446690952051454865289257511345\,k_{{2}}^{11}\\
&-5256101291995032544599277175058425930712185071813464946885\,k_{{2}}^{10}\\
&+70122573819275858453538672368351623032523271094492805550\,k_{{2}}^{9}\\
&+376338742984691754801480051134092494791792064666753550\,k_{{2}}^{8}\\
&-1316760022331698962634453099334594874797705155466705\,k_{{2}}^{7}\\
&-5505495750662436547862196117124969486088681961485\,k_{{2}}^{6}\\
&+12555177418738885902467941529691860825870238720\,k_{{2}}^{5}\\
&+176520205866143202283905174900805862536243200\,k_{{2}}^{4}\\
&-452038623179439358955980730065952248012800\,k_{{2}}^{3}\\
&-7697216805222090220411113153159403585536\,k_{{2}}^{2}\\
&+40208114346381897524553294486239920128\,k_{{2}}\\
&-55345609604109527952398613254209536,\\
R_6&=81642460060267214759804220641838072236440477070626195418905470941083810292\\
&968891926695412098418661614396677
\,k_{{2}}^{38}+8554635236181774812542851023404692010353\\
&26559787375208717798398612512610786885957329777463166218670134955276
\,k_{{2}}^{37}\\
&+3983641059892053868666048940241816372115146946526079749133366450143983\\
&685297060685693336628187138977843033906\,k_{{2}}^{36}+10800832180677194445434\\
&724051260565272028480117454678208524352048697677786324047502700362425\\
&134606637901537722\,k_{{2}}^{35}+1861184424818983364552609410428706085020953133599176487624\\
&1082756250300325354610515100540322258263356151997505
\,k_{{2}}^{34}\\
&+20433219326868882410780315108683964803514605147560796603099\\
&666960191437685382973750542489268548722637925121484
\,k_{{2}}^{33}\\
&+123524551562823807446935732021017843529952711743119791073497\\
&91543296837005788395904967344802498607798633008012
\,k_{{2}}^{32}\\
&-356101639163494904584678366530439864197827361569527075087357\\
&243781376483717571284745517388141968801812783988
\,k_{{2}}^{31}\\
 \nonumber
\end{split}
\end{equation}

\begin{equation}
\begin{split}
&-855543478581650557041675445854581230608181181114250639100152\\
&7367734013486422526195433784627330007312637521441
\,k_{{2}}^{30}\\
&-899222656124848590440998290217688465615415933561620143035182\\
&6589832463923763737244565570616466046466833094970
\,k_{{2}}^{29}\\
&-5362253387151727844716297094276345745694747840673658995214942\\
&893996067262880667596698680266828343090862736438
\,k_{{2}}^{28}\\
&-2084650284124464232269542587051075098782422361393046736570330\\
&946837060600316911112345284280063910005147443474
\,k_{{2}}^{27}\\
&-5331186684717390027527068690154122600161764949994032388574100\\
&35854388192435988534341897013194737657112671079
\,k_{{2}}^{26}\\
&-8453694258057299776118305307451377084831268341676793629271372\\
&2116878332568965543856244123775004196411212808
\,k_{{2}}^{25}\\
&-6832224634463696713909292711018776602397712379428983203235633\\
&498268994424148297257149070938223468138371800
\,k_{{2}}^{24}\\
&-1938698684014866374138286765755589817275141011508321565502230\\
&6295942470918376178704099463133702974373880
\,k_{{2}}^{23}\\
&+3886404714525680478436476571495300346854580041596559106229896\\
&1597227201888316054385408091911381863843985
\,k_{{2}}^{22}\\
&+2533430342443214931359132995183224473626578615631268777381320\\
&014363643942113962722417855508768218263010
\,k_{{2}}^{21}\\
&+2484480242851330232419560223316541759307083508048277092384835\\
&0211452828654708528020847014839342105970
\,k_{{2}}^{20}\\
&-3370307994157824138333596609165421189387369016903926676615333\\
&127742776429068021038429191536970118650
\,k_{{2}}^{19}\\
&-1210646592339666898297814541413731201419562890540753143608669\\
&28452459009344751647623530714108466405
\,k_{{2}}^{18}\\
&+6557212592303017019650067531434746769647057465109753392750866\\
&81671233076399250667462463825361260
\,k_{{2}}^{17}\\
&+1033195891630950864197923771406472172773858002506365960972264\\
&35022173723223610566637649700158860
\,k_{{2}}^{16}\\
&+1152618011724133911131881076353189139469520873389337311237604\\
&192111173479008723495933626296620
\,k_{{2}}^{15}\\
&-3296723261054974680220771950938953745369576124275012552208643\\
&8769811313475606078899676568675
\,k_{{2}}^{14}\\
&-7372432584851908452930560658218590879869355542842571780616650\\
&35527027726071680214565012358\,k_{{2}}^{13}+29427693514925808523437035512\\
&81367259799855177020979617678647821431517603375098235588246\,k_{{2}}^{12}\\
&+1652780329250025351870433200157445055533587654470086460801313\\
&55093078973901132224565826\,k_{{2}}^{11}+18320840656707481535820667046029\\
&1251465034441890044122232924617989283481554825270987\,k_{{2}}^{10}\\
 \nonumber
\end{split}
\end{equation}

\begin{equation}
\begin{split}
&-1892889107498379215759699674801737136844658974642339207583950\\
&3842752551576918808720\,k_{{2}}^{9}-2944866600091607446117968503796855809\\
&3063224064235059464395246348212593734580416\,k_{{2}}^{8}+1320191920665524\\
&107087294583146812989293475967326750339296236044198729718390912\,k_{{2}}^{7}\\
&-3068662117758780868064943909747958466085273140693944425723674\\
&92524770613138\,k_{{2}}^{6}-507716429982515482349455037867512887213746622\\
&11413983115410601506547565166\,k_{{2}}^{5}+140218391155856183217676031924\\
&298597312950491398447790358188744468264380\,k_{{2}}^{4}+56431135009290372\\
&6722954096659542967326729422312749562519733968190672\,k_{{2}}^{3}-32453573\\
&45340865168544329892341383787825462690287868803341183598144\,
k_{{2}}^{2}\\
&+4672541551069989116041573841580968722813107076376215392621660576\,k_{{
2}}\\
&-1145363252306785173851064975145467934264549208846368468611648,\\
R_7&=34129852019334120275001\,k_{{2}}^{6}+18955466201762993305416\,k_{{2}}^{5}\\
&+2079190417275568260115\,k_{{2}}^{4}+11545127440608096320\,k_{{2}}^{3}\\
&-3094243064633242585\,k_{{2}}^{2}-15298201362610744\,k_{{
2}}+1392902183498461,\\
R_8&=74878248801\,k_{{2}}^{4}+75073646804\,k_{{2}}^{3}-325811394\,k_{{2}}^{2}
-35967996\,k_{{2}}+491401,\\
R_9&=34129852019334120275001\,k_{{2}}^{6}+18955466201762993305416\,k_{{2}}^{5}\\
&+2079190417275568260115\,k_{{2}}^{4}+11545127440608096320\,k_{{2}}^{3}\\
&-3094243064633242585\,k_{{2}}^{2}-15298201362610744\,k_{{
2}}+1392902183498461,\\
R_{10}&=514959115623141851336003033444651672162268192381\,k_{{2}}^{13}\\
&+117109671411481100316406909482175799959340582595\,k_{{2}}^{12}\\
&-18340125718418870324370612366298647132416682954\,k_{{2}}^{11}\\
&+366527529613459370771331193768602701751101450\,k_{{2}}^{10}\\
&-379955788713767745549920982715959250364501965\,k_{{2}}^{9}\\
&+21889673872367143957751295631807636649008877\,k_{{2}}^{8}\\
&+341871929691167075836381668283836930880020\,k_{{2}}^{7}\\
&-51724631573689868994658484974779627274708\,k_{{2}}^{6}\\
&+1173833924252340563250903096402567527075\,k_{{2}}^{5}\\
&+5250100448198199403687806473352176285\,k_{{2}}^{4}\\
&-648976539257362445568036079320186954\,k_{{2}}^{3}\\
&+11661228914135151280020506988360650\,k_{{2}}^{2}\\
&-92712858325208352861180603850579\,k_{{2}}\\
&+289072187996578736260607421875.\\
R_{11}&=1078794467013971763259359619128104601\,k_{{2}}^{6}+
4190789390543288\\
&52718339349023527216\,k_{{2}}^{5}+33396015289931023560113178292575115\,k_{{2}}^{4}\\
&+51109179326295655736130695620320\,k_{{2}}^{3}+2612847963967947253253\\
&6013415\,k_{{2}}^{2}+5418875034949417696457856
\,k_{{2}}+533709272401908912261,\\
R_{12}&=47192996838389807416746580659\,k_{{2}}^{4}-
20966587660735653205580364\,k_{{2}}^{3}\\
&-658192566458176776046\,k_{{
2}}^{2}+563983250276411636\,k_{{2}}-400954223341,\\
\nonumber
\end{split}
\end{equation}

\begin{equation}
\begin{split}
R_{13}&=1078794467013971763259359619128104601\,k_{{2}}^{6}+41907893905432\\
&8852718339349023527216\,k_{{2}}^{5}+33396015289931023560113178292575115\,k_{{2}}^{4}\\
&+51109179326295655736130695620320\,k_{{2}}^{3}+261284796396794725325360\\
&13415\,k_{{2}}^{2}+5418875034949417696457856\,k_{{2}}+533709272401908912261,\\
R_{14}&=648918481217371346882003704556042428115795163\,k_{{2}}^{7}-120695\\
&20134716704004714952557737875934559559\,k_{{2}}^{6}+4418216183863088669\\
&646708188024293824411\,k_{{2}}^{5}-225642379665655687012727484781439055\,k_{{2}}^{4}\\
&-94217435382044199079650160444215\,k_{{2}}^{3}+
123328823049113171017254\\
&37603\,k_{{2}}^{2}-202932195376509203593919\,k_{{2}}+2141281744336142451,\\
R_{15}&=14056083607991138241973991322676251462172878957848237991\\
&3553898385129889746029544845326003185723\,k_{{2}}^{16}+2487625815782812\\
&35354806794696673165631650960435230785210418714255826514104406\\
&22889919214839006\,k_{{2}}^{15}-155897796352570630306233247711593095023\\
&6049804018682712620839206639305264934164189345916714152\,k_{{2}}^{14}+\\
&33048677661362277534579674865020142536299750624126833610998614\\
&687308501888978076461728818466\,k_{{2}}^{13}-3048362047040274113781939\\
&99924518110797455949109123936691399056447996404599640181367954\\
&976\,k_{{2}}^{12}+1066718616203081609059336541064799313529528292309053\\
&117793766851288330417409309962131390\,k_{{2}}^{11}-32299540207717384207897\\
&6177113703107750469071212584166178363676248683954792172388064\,k_{{2}}^{10}\\
&-4571663446818831633613643989738970986766239848463453468645\\
&90653471179878132193934\,k_{{2}}^{9}+927944105023302105425959038081905\\
&53340860402362044666722444639156785683061142\,k_{{2}}^{8}+301760908371\\
&49661510853942896789833165111854613937461288783334854353093578\,k_{{2}}^{7}\\
&-1211204553937845261047331963454850242156244631992369887070\\
&486269112584\,k_{{2}}^{6}-157293712612208298921791399071661576835299\\
&6009193827408523506011306\,k_{{2}}^{5}-10549702522586887516038904142\\
&925745391047495992912539169946232\,k_{{2}}^{4}+274332540684284987959\\
&35202777461547303123585416897513398426\,k_{{2}}^{3}+1460770671064342\\
&8657527700471030908560741272471074704\,k_{{2}}^{2}-1510216620258154\\
&60184527833765062030321382907626\,k_{{2}}+68022059837701619029533623932680489962439,\\
R_{16}&=46443086472084477153537655823845850437972818518257003503456\\
&394063472690608258502194893271995676780497161204331247203\,k_{{2}}^{19}\\
&-52158170657966182914162844526727943938490389931550890626039\\
&6478645497829435537399388053538565860982068438876535183\,k_{{2}}^{18}\\
&+3201257950756916754465804315535444189774519741953603309022\\
&201006215501288979507505209223781587857597403874816759\,k_{{2}}^{17}\\
&-1927995411081746058973956781918474223027984128328479003815\\
&127959362776009883822706578699596916599521844783923\,k_{{2}}^{16}\\
&-2268356906763940471268547817632435582187593709272285523121\\
\nonumber
\end{split}
\end{equation}

\begin{equation}
\begin{split}
&16167860486647589829033930636118167091647211276\,k_{{2}}^{15}\\
&+53584393800503414261090476532734443550844345600599565076920\\
&6857989232214394382033462855395684493550124\,k_{{2}}^{14}\\
&-19443976348590594703350908603230305113809875060581048140581\\
&6912862013245444213168220275974741047724\,k_{{2}}^{13}\\
&+28992180967930349314441595446461339819791158773334326427838\\
&411845390402370921204656323963954252\,k_{{2}}^{12}-28394928847\\
&94291756669617736291488403000782829994786262524426867742351\\
&17861143059182535614\,k_{{2}}^{11}-5773592163286104238291604956032\\
&91354259491139120371012380026364185450663400998521016538\,k_{{2}}^{10}\\
&+9947593079915175208553037303126970657054478671877805850024376\\
&7986765129498109929514\,k_{{2}}^{9}-8535714135086335438265639803987\\
&202183206550314600701159128643417292433587490274\,k_{{2}}^{8}+418329857\\
&38467958747560692134980030626881692592356322830329320\\
&5262311495252\,k_{{2}}^{7}-1261829187607242551912965107896606397635\\
&3591963391247363619554648675284\,k_{{2}}^{6}+1908800964866068184886\\
&43179802191139907804565129923180626416053652\,k_{{2}}^{5}-176042942\\
&0526840061516849181898454180692596708492743946208532\,k_{{2}}^{4}\\
&+3261775682688532894456623956405746916790234221598965267\,k_{{2}}^{3}\\
&-4220059061286457391465143087604107072714470954991\,k_{{2}}^{2}\\
&-82934020944978298855798955351132208803593\,k_{{2}}\\
&+301375413724781100070065356301069789.\nonumber
\end{split}
\end{equation}

\section{The Lorenz--Stenflo System}\label{sect-kcc-B}

In 1996, Stenflo \cite{LS1996} derived a system to describe the evolution of finite amplitude acoustic gravity waves in a rotating atmosphere. This system is rather simple and reduces to the well-known Lorenz system \cite{LEN1963} when the parameter associated with the flow rotation is set to zero. The Lorenz--Stenflo (L--S) system is described by
\begin{equation}\label{kcc-LS-1}
\begin{split}
\dot{x}&=a(y-x)+dw,\\
\dot{y}&=cx-y-xz,\\
\dot{z}&=-bz+xy,\\
\dot{w}&=-x-aw,
\end{split}
\end{equation}
where $a$, $b$, $c$, $d$ are real parameters; $a$, $c$, $d$ are the Prandtl, the Rayleigh, and the rotation numbers respectively, and $b$ is a geometric parameter.

Here we provide some explicit conditions for system \eqref{kcc-LS-1} to have Jacobi stable fixed points by using the {\bf Jacobi Test II}. The main result is stated as follows.

\begin{theorem}\label{main-4-kcc}
The L--S system \eqref{kcc-LS-1} has two Jacobi stable fixed points if one of the following three conditions
\begin{equation}\label{kccls-1}
\begin{split}
\mathcal{C}_1&=[c=26,d=3/2]\,\wedge\,[b<0,\, a<0,\, 0<T_1,\, 0<T_2],\\
\mathcal{C}_2&=[c=26,d=3/2]\,\wedge\,[0<b,\, a<0,\, 0<T_1,\, T_2<0],\\
\mathcal{C}_3&=[b=7/10,d=3/2]\,\wedge\,[a<0,\, 0<T_3,\, 0<T_4]\nonumber
\end{split}
\end{equation}
holds.
\end{theorem}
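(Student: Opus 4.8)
The plan is to apply \textbf{Jacobi Test II}; that is, to instantiate Theorem~\ref{semi-kcc} for the four-dimensional L--S system~\eqref{kcc-LS-1} and then classify the real solutions of the resulting semi-algebraic system by symbolic computation.

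\emph{Step 1 (fixed points).} Setting the right-hand sides of~\eqref{kcc-LS-1} to zero and eliminating $w,y,z$ (via $w=-x/a$, $y=(a^2+d)x/a^2$, $z=xy/b$, assuming $a,b\neq0$) reduces the equilibrium conditions to $x\bigl(c-(a^2+d)/a^2-(a^2+d)x^2/(a^2b)\bigr)=0$. Hence besides the origin there is a symmetric pair of nontrivial fixed points whenever $x^2=b\bigl(a^2c-a^2-d\bigr)/(a^2+d)>0$; these are the candidates for the ``two Jacobi stable fixed points''.

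\emph{Step 2 (reduction to Routh--Hurwitz inequalities).} Compute the Jacobian $J(\bar{\boldsymbol x},\boldsymbol\mu)$ of~\eqref{kcc-LS-1}, form $J^2$, and write its characteristic polynomial $\bar p(\lambda)=\lambda^4+\bar a_1\lambda^3+\bar a_2\lambda^2+\bar a_3\lambda+\bar a_4$ with $\bar a_i$ polynomial in $\bar{\boldsymbol x}$ and $\boldsymbol\mu$. By Lemma~\ref{lem-main-1}, $\bar{\boldsymbol x}$ is Jacobi stable iff every root of $\bar p$ has negative real part, which by the Routh--Hurwitz criterion is $\bar a_4>0$, $\Delta_1=\bar a_1>0$, $\Delta_2>0$, $\Delta_3>0$, $\Delta_4>0$ as in~\eqref{kcc-main-12}. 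Together with the equilibrium equations of Step~1 this is exactly the semi-algebraic system~\eqref{kcc-main-13} for the L--S system; note that the origin is also a solution of the equality part, and whether it satisfies these inequalities is automatically accounted for in the count.

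\emph{Step 3 (real solution classification).} Because the full problem has four parameters, specialize to the slices $[c=26,\,d=3/2]$ and $[b=7/10,\,d=3/2]$, leaving two free parameters in each case (cf. Remark~\ref{rem-kcc-cdc2}). On each slice run STEPS~2--5 of Section~\ref{sec-kcc-test2}: triangularize the equality part, then use real solution classification (discriminant varieties / Yang--Xia's method) to partition the remaining parameter plane into cells on which the number of real solutions of~\eqref{kcc-main-13} and the signs of the inequality polynomials are invariant, pick a rational sample point in each cell, and isolate the real roots there. Collecting the cells on which the count equals $2$ and reading off the signs of the (factors of the) defining polynomials of the discriminant variety yields the polynomials $T_1,\dots,T_4$ and the sign conditions assembled into $\mathcal C_1,\mathcal C_2,\mathcal C_3$. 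Since only sufficiency is asserted, it suffices to verify that each region described by $\mathcal C_i$ is contained in a ``count $=2$'' cell, which the sample-point computation certifies.

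\emph{Main obstacle.} The bottleneck is the sheer size of the symbolic computation: after eliminating $\bar{\boldsymbol x}$ from the Hurwitz determinants of $J^2$, the discriminant variety is cut out by very large polynomials (as the explicit $R_i$ of Appendix~\ref{sect-ex} already illustrate), so keeping the triangular decomposition, the cell decomposition, and the real-root isolation at sample points tractable---and factoring the output so that the answer can be stated compactly through $T_1,\dots,T_4$---is where essentially all the effort goes; the remaining steps are a routine instantiation of Theorem~\ref{semi-kcc}.
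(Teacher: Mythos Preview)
Your proposal is correct and follows essentially the same approach as the paper: the paper simply states that Theorem~\ref{main-4-kcc} ``can be proved by using similar calculations and arguments to the proof of Theorem~\ref{main-kcc1}'' and omits the details, i.e., one applies \textbf{Jacobi Test II} (Theorem~\ref{semi-kcc}) and the real-solution-classification pipeline of Section~\ref{sec-kcc-test2} on the specified parameter slices. Your write-up in fact supplies more of the concrete computation (the elimination for the fixed points, the explicit Routh--Hurwitz setup for $J^2$, and the observation that only sufficiency is claimed) than the paper itself provides.
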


Theorem \ref{main-4-kcc} can be proved by using similar calculations and arguments to the proof of Theorem \ref{main-kcc1}. The detailed proof is omitted here. The explicit expressions of $T_i$ are the following:
\begin{equation}
\begin{split}
T_1&=64\,a^{15}+128\,a^{13}b^{2}+64\,a^{11}b^{4}+192\,a^{14}-8000\,a^{13}b-1344\,a^{12}b^{2}-8000\,a^{11}b^{3}\\
&\quad-1536\,a^{10}b^{4}+608\,a^{13}-16000\,a^{12}b+178848\,a^{11}b^{2}+
73600\,a^{10}b^{3}+21824\,a^{9}b^{4}\\
&\quad+1088\,a^{12}-12320\,a
^{11}b+198176\,a^{10}b^{2}-1097920\,a^{9}b^{3}-6720\,a^{8}b
^{4}+1760\,a^{11}\\
&\quad-59840\,a^{10}b+21296\,a^{9}b^{2}+201984\,a
^{8}b^{3}+32448\,a^{7}b^{4}+2352\,a^{10}+768\,a^{9}b\\
&\quad+585312\,a^{8}b^{2}+69504\,a^{7}b^{3}-9504\,a^{6}b^{4}+1776\,a^{9}
-53952\,a^{8}b-394512\,a^{7}b^{2}\\
&\quad+110016\,a^{6}b^{3}-432
\,a^{5}b^{4}+2736\,a^{8}-39600\,a^{7}b+399888\,a^{6}b^{2}+
15984\,a^{5}b^{3}\\
&\quad-3888\,a^{4}b^{4}-684\,a^{7}+25056\,a^{6}
b-34992\,a^{5}b^{2}-39744\,a^{4}b^{3}-324\,a^{3}b^{4}+2916
\,a^{6}\\
&\quad-86724\,a^{5}b-41796\,a^{4}b^{2}-972\,a^{3}b^{3}+
648\,a^{2}b^{4}-3618\,a^{5}+36504\,a^{4}b+3402\,a^{3}b^{2}\\
&\quad+1944\,a^{2}b^{3}+3564\,a^{4}-35154\,a^{3}b-486\,a^{2}b^{2
}-4212\,a^{3}-2268\,a^{2}b\\
&\quad+1215\,ab^{2}+2673\,a^{2}+2430\,ab-
1944\,a+729,\\
T_2&=1024\,a^{25}b+2048\,a^{23}b^{3}+1024\,a^{21}b^{5}-102400\,a^{25}-46080\,a^{24}b-332800\,a^{23}b^{2}\\
&-119808\,a^{22}b^{3}-230400\,a^{21}b^{4}-73728\,a^{20}b^{5}-409600\,a^{24}+
10800640\,a^{23}b+7833600\,a^{22}b^{2}\\
&+18090496\,a^{21}b^{3}
+9676800\,a^{20}b^{4}+3507200\,a^{19}b^{5}-1734656\,a^{23}+
37417984\,a^{22}b\\
&-424120832\,a^{21}b^{2}-312825344\,a^{20}b^
{3}-386302976\,a^{19}b^{4}-95136768\,a^{18}b^{5}\\
&-4225024\,a^
{22}+100739072\,a^{21}b-1135625216\,a^{20}b^{2}+7420781312\,a^
{19}b^{3}\\
&+4588603392\,a^{18}b^{4}+1110275584\,a^{17}b^{5}-9981952\,a^{21}+224365824\,a^{20}b\\
&-1114916352\,a^{19}b^{2}+10749888512\,a^{18}b^{3}-55741947392\,a^{17}b^{4}-421719552\,
a^{16}b^{5}\\
&-17460224\,a^{20}+308750848\,a^{19}b-4281228288\,a
^{18}b^{2}-2329730048\,a^{17}b^{3}\\
&+12866933760\,a^{16}b^{4}+
2085397248\,a^{15}b^{5}-27503616\,a^{19}+573111552\,a^{18}b\\
&-1304512512\,a^{17}b^{2}+34330615296\,a^{16}b^{3}-18127017216\,
a^{15}b^{4}-631109376\,a^{14}b^{5}\\
&-37567488\,a^{18}+
307343232\,a^{17}b-4333292544\,a^{16}b^{2}-26064317184\,a^{15}b^{3}\\
&+8521984512\,a^{14}b^{4}+829949184\,a^{13}b^{5}-37366272\,a^{17}+899697024\,a^{16}b\\
&-2002513536\,a^{15}b^{2}+27555901056\,a^{14}b^{3}-7730560512\,a^{13}b^{4}-343115136\,a
^{12}b^{5}\\
&-46872576\,a^{16}-213326784\,a^{15}b-489215232\,a^{
14}b^{2}-9876429504\,a^{13}b^{3}\\
&-95095296\,a^{12}b^{4}+154484928\,a^{11}b^{5}-12517632\,a^{15}+963038592\,a^{14}b-
944495424\,a^{13}b^{2}\\
&+2960556480\,a^{12}b^{3}+6986295360\,a
^{11}b^{4}-45748800\,a^{10}b^{5}-42667776\,a^{14}\\
&-531171648\,a^{13}b-793224576\,a^{12}b^{2}+314807904\,a^{11}b^{3}+301242240\,a^{10}b^{4}\\
&-190022112\,a^{9}b^{5}+41326848\,a^{13}+426272544\,a^{12}b+1781828928\,a^{11}b^{2}\\
&+137505600\,a^{10}b^{3}-559563552\,a^{9}b^{4}+2807136\,a^{8}b^{5}-50917248\,a^{12}\\
&+21117024\,a^{11}b-1688884992\,a^{10}b^{2}+2420528832\,a
^{9}b^{3}-815033664\,a^{8}b^{4}\\&+5252688\,a^{7}b^{5}+86220288\,a^{11}
-414950688\,a^{10}b+760757184\,a^{9}b^{2}\\
&-1845485856\,
a^{8}b^{3}-195352560\,a^{7}b^{4}+21216816\,a^{6}b^{5}-74929536\,a^{10}\\
&+479471076\,a^{9}b-53887680\,a^{8}b^{2}-353537784\,a^{7}b^{3}+228217824\,a^{6}b^{4}\\
&+4137804\,a^{5}b^{5}+94700016\,a^{9}-537666660\,a^{8}b-252085284\,a^{7}b^{2}\\
&+315240012\,a^{6}b^{3}+31589028\,a^{5}b^{4}-4181544\,a^{4}b^{5}-78708672\,a^{8}\\
&+193912542\,a^{7}b-195062904\,a^{6}b^{2}+20483442\,a^{5}b^{3}-21327624\,a^{4}b^{4}\\
&-314928\,a^{3}b^{5}+64857672\,a^{7}-120451212\,a^{6}b+150889878\,a^{5}b^{2}\\
\nonumber
\end{split}
\end{equation}

\begin{equation}
\begin{split}
&-8752374\,a^{4}b^{3}-1154736\,a^{3}b^{4}+209952\,a^{2}b^{5}
-47046744\,a^{6}-32866236\,a^{5}b\\
&+44203644\,a^{4}b^{2}-5045409
\,a^{3}b^{3}+629856\,a^{2}b^{4}+23304672\,a^{5}+16697745\,a
^{4}b\\
&-21454470\,a^{3}b^{2}-551124\,a^{2}b^{3}-11311164\,a^{4}+9789012\,a^{3}b-1915812\,a^{2}b^{2}\\
&+393660\,ab^{3}+2283228
\,a^{3}-2696571\,a^{2}b+787320\,ab^{2}+866052\,a^{2}\\
&-866052\,ab-236196\,a+236196\,b.\\
T_3&=160000\,a^{15}+480000\,a^{14}-560000\,a^{13}c+2236800\,a^{13}-1198400\,a^{12}c\\
&+352800\,a^{11}c^{2}+4232000\,a^{12}-1954400\,
a^{11}c+501760\,a^{10}c^{2}-54880\,a^{9}c^{3}\\
&+7428816\,a^{11}-5674256\,a^{10}c+172088\,a^{9}c^{2}+14456512\,a^{10}-
2222640\,a^{9}c\\
&+1505280\,a^{8}c^{2}+9533088\,a^{9}-9355752\,a
^{8}c-623868\,a^{7}c^{2}+24279072\,a^{8}\\
&-2295720\,a^{7}c+
1128960\,a^{6}c^{2}+6212880\,a^{7}-5944428\,a^{6}c-132300\,a^{5}c^{2}\\
&+21012912\,a^{6}-4093740\,a^{5}c+5641812\,a^{5}-949914\,a^{4}c+9875412\,a^{4}\\
&-2835000\,a^{3}c+4799979\,a^{3}+4174092\,a^{2}+880875\,a+1822500,\\
T_4&=102400000\,a^{25}c-120320000\,a^{25}+445440000\,a^{24}c-304640000\,a^{23}c^{2}\\
&-535040000\,a^{24}+2584832000\,a^{23}c-1232896000\,a^{22}c^{2}+338688000\,a^{21}c^{3}\\
&-2854553600\,a^{23}+7597952000\,a^{22}c-4368268800\,a^{21}c^{2}+1096345600\,a^{20}c^{3}\\
&-162444800\,a^{19}c^{4}-8232422400\,a^{22}+
21222730240\,a^{21}c-10897295360\,a^{20}c^{2}\\
&+1866045440\,a^{19}c^{3}-328401920\,a^{18}c^{4}+30732800\,a^{17}c^{5}-23273663232\,a^{21}\\
&+45899996416\,a^{20}c-19264251776\,a^{19}c^
{2}+5598612992\,a^{18}c^{3}-106028160\,a^{17}c^{4}\\
&-49983813632
\,a^{20}+83374773504\,a^{19}c-41432491520\,a^{18}c^{2}+3739328768\,a^{17}c^{3}\\
&-1082672640\,a^{16}c^{4}+18439680\,a^
{15}c^{5}-96326281728\,a^{19}+149143164544\,a^{18}c\\
&-37381367744
\,a^{17}c^{2}+12369205632\,a^{16}c^{3}+255323712\,a^{15}c^
{4}-167647532032\,a^{18}\\
&+173787017856\,a^{17}c-91354190592\,a^{
16}c^{2}+1789062912\,a^{15}c^{3}-1157748480\,a^{14}c^{4}\\
&+13829760\,a^{13}c^{5}-227425725312\,a^{17}+312896469312\,a^{16
}c-28797599040\,a^{15}c^{2}\\
&+17772822144\,a^{14}c^{3}+137606112
\,a^{13}c^{4}-362506999296\,a^{16}+180192326976\,a^{15}c\\
&-132395042304\,a^{14}c^{2}-4868809344\,a^{13}c^{3}-598631040\,
a^{12}c^{4}-304400165184\,a^{15}\\
&+467753832864\,a^{14}c+
7232649984\,a^{13}c^{2}+18692868096\,a^{12}c^{3}+118688976\,a
^{11}c^{4}\\
&-561297799296\,a^{14}+42548531040\,a^{13}c-
129109365504\,a^{12}c^{2}-7775853120\,a^{11}c^{3}\\
&-284497920\,
a^{10}c^{4}-182374282944\,a^{13}+504294917040\,a^{12}c+21884278248\,a^{11}c^{2}\\
&+11333685888\,a^{10}c^{3}+33339600\,a
^{9}c^{4}-663683825664\,a^{12}-79178271120\,a^{11}c\\
&-75280562784\,a^{10}c^{2}
-2945887056\,a^{9}c^{3}+58961684592\,a^{11}+
358509021624\,a^{10}c\\
&+1343542788\,a^{9}c^{2}+2332152648\,a^{8}
c^{3}-595091251008\,a^{10}-46680515784\,a^{9}c\\
&-16888045104\,a^
{8}c^{2}+514382400\,a^{7}c^{3}+170623493928\,a^{9}+
136412319996\,a^{8}c\\
&-11093173560\,a^{7}c^{2}-377823441168\,a^{
8}+23152074804\,a^{7}c+1877169168\,a^{6}c^{2}\\
&+111499520652\,a^{7}+15649706898\,a^{6}c-4208444100\,a^{5}c^{2}-164481701004\,a^{6}\\
&+21640434984\,a^{5}c-459270000\,a^{4}c^{2}+43753552839\,a^{5}+1566740556\,a^{4}c\\
&-60554338344\,a^{4}+2447253000\,a^{3}c+22319203239\,a^{3}+2755620000\,a^{2}c\\
&-22575718656\,a^{2}+8040505500\,a-4133430000.\nonumber
\end{split}
\end{equation}

\end{document}